\newtheorem{lemma}{Lemma}
\newcommand\zerob{\boldsymbol{0}}
\newcommand\bb{\boldsymbol{b}}
\newcommand\cb{\boldsymbol{c}}
\newcommand\hb{\boldsymbol{h}}
\newcommand\mb{\boldsymbol{m}}
\newcommand\nb{\boldsymbol{n}}
\newcommand\qb{\boldsymbol{q}}
\newcommand\yb{\boldsymbol{y}}
\newcommand\Ab{\mathbb{A}}
\newcommand\Bb{\mathbb{B}}
\newcommand\Db{\mathbb{D}}
\newcommand\Eb{\mathbb{E}}
\newcommand\Gb{\mathbb{G}}
\newcommand\Hb{\mathbb{H}}
\newcommand\Ib{\mathbb{I}}
\newcommand\Pb{\mathbb{P}}
\newcommand\Wb{\mathbb{W}}
\newcommand\ac{{\mathcal A}}
\newcommand\cc{{\mathcal C}}
\newcommand\vectorize{\mathrm{vec}}
\newcommand\tr{\mathrm{tr}}
\newcommand{\dis}{\displaystyle}
\newcommand{\oneb}{{\bf 1}}
\newcounter{step}
\begin{document}
\title{Maximum-Likelihood Priority-First Search Decodable Codes for Combined Channel
       Estimation and Error Protection}
\author{Chia-Lung Wu,~\IEEEmembership{Student Member,~IEEE,}
        Po-Ning~Chen,~\IEEEmembership{Senior Member,~IEEE,}
        Yunghsiang~S.~Han,~\IEEEmembership{Member,~IEEE,}
        and
        Ming-Hsin Kuo
\thanks{Submission to IEEE Transactions on Information Theory.  {\bf Prof.~P.-N.~Chen will be the corresponding author for this submission}.
This work was supported by the NSC of Taiwan, ROC, under grants NSC 95-2221-E-009-054-MY3.}
\thanks{C.-L.~Wu is currently studying toward his Ph.D.~degree under the joint advisory of Profs.~P.-N.~Chen
        and Y.~S.~Han (E-mail: clwu@banyan.cm.nctu.edu.tw).}
\thanks{P.-N.~Chen is with the Department of Communications Engineering, National Chiao-Tung University, Taiwan, ROC (E-mail: qponing@mail.nctu.edu.tw).}
\thanks{Y.~S.~Han is with the Graduate Institute of Communication Engineering, National Taipei University, Taiwan, ROC (E-mail:yshan@mail.ntpu.edu.tw).}
\thanks{M.~H.~Kuo is currently studying toward his master degree under the
advisory of Prof.~P.-N.~Chen.}
       }

\maketitle

\begin{abstract}
The code that combines channel estimation and error protection
has received general attention recently, and has been considered
a promising methodology to compensate multi-path fading effect. It has
been shown by simulations that such code design can considerably
improve the system performance over the conventional
design with separate channel estimation and error protection
modules under the same code rate. Nevertheless, the major obstacle
that prevents from the practice of the codes is that the existing
codes are mostly searched by computers, and hence exhibit no good structure
for efficient decoding. Hence, the time-consuming exhaustive
search becomes the only decoding choice, and the
decoding complexity increases dramatically with the codeword
length. In this paper, by optimizing the signal-to-noise ratio, we
found a systematic construction for the codes for
combined channel estimation and error protection,
and confirmed its equivalence in
performance to the computer-searched codes by simulations.
Moreover, the structural codes that we construct by rules can now
be maximum-likelihoodly decodable in terms of a newly derived
recursive metric for use of the priority-first search decoding algorithm.
Thus, the decoding complexity reduces significantly when
compared with that of the exhaustive decoder. The
extension code design for fast-fading channels is also presented.
Simulations conclude that
our constructed extension code is robust in performance even
if the coherent period is shorter than the codeword length.
\end{abstract}
\begin{keywords}
Code design,
Priority-first search decoding, Training codes, Time-varying multipath fading
channel, Channel estimation, Channel equalization, Error-control coding
\end{keywords}

\section{Introduction}

The new demand of wireless communications in recent years inspires
a quick advance in wireless transmission technology. Technology
blossoms in both high-mobility low-bit-rate and low-mobility
high-bit-rate transmissions. Apparently, the next challenge in
wireless communications would be to reach high transmission rate
under high mobility. The main technology obstacle for
high-bit-rate transmission under high mobility is the seemingly
highly time-varying channel characteristic due to movement; such a
characteristic further enforces the difficulty in compensating the
intersymbol interference. Presently, a typical receiver for
wireless communications usually contains separate modules
respectively for channel estimation and channel equalization. The
former module estimates the channel parameters based on a known
training sequence or pilots, while the latter module uses these
estimated channel parameters to eliminate the channel effects due
to multipath fading. However, the effectiveness in channel fading
elimination for such a system structure may be degraded at a fast
time-varying environment, which makes high-bit-rate transmission
under high-mobility environment a big challenge.

Recent researches
\cite{Coskun}\cite{Giese}\cite{Heegard}\cite{Seshadri}\cite{Skoglund}
have confirmed that better system performance can be obtained by
jointly considering a number of system devices, such as channel
coding, channel equalization, channel estimation, and
modulation, when compared with the system with individually
optimized devices. Specially, some works on combining devices of
codeword decision and channel effect cancellation in typical
receivers can appropriately exclude channel estimation labor and still
perform well.
In 1994, Seshadri \cite{Seshadri} first proposed a blind
maximum-likelihood sequence estimator (MLSE) in which the data and
channel are simultaneously estimated. Skoglund et al
\cite{Skoglund} later provided a milestone evidence for the fact
that the joint design system is superior in combating with serious
multipath block fading. They also applied similar technique
to a multiple-input-multiple-output (MIMO) system at a subsequent work
\cite{Giese}. In short, Skoglund et al looked for the non-linear codes that
are suitable for this channel by computer search. Through
simulations, they found that the non-linear code that
combines channel estimation and error protection, when being
carefully designed by considering multipath fading effect,
outperforms a typical communication system with perfect channel
estimation by at least 2 dB. Their results suggest the high
potential of applying a single, perhaps non-linear, code to
improve the transmission rate at a highly mobile environment,
at which channel estimation becomes technically infeasible.
Similar approach was also proposed by \cite{Coskun}, and the authors
actually named such codes the \emph{training codes}. In
\cite{Chugg}, Chugg and Polydoros derived a recursive metric for
joint maximum-likelihood (ML) decoding, and hint that the
recursive metric may only be used with the sequential algorithms
\cite{Anderson}. As there are no efficient decoding
approaches for the codes mentioned above, these authors mostly
considered only codes of short length, or even just the principle
of code design for combined channel estimation and error
protection.

One of the drawbacks of these combined-channel-estimation-and-error-protection codes is that only exhaustive
search can be used to decode their codewords due to lack of
systematic structure.
Such drawback apparently inhibits the use of the codes for
combined channel estimation and error protection in practical
applications. This leads to a natural research query on how to
construct an efficiently decodable code with channel estimation and error
protection functions.

In this work, the research query was resolved by first finding that
the codeword that maximizes the system signal-to-noise ratio (SNR)
should be orthogonal to its delayed counterpart.
We then found that the code consists of the properly chosen
self-orthogonal codewords can compete with the computer-searched
codes in performance. With this self-orthogonality property,
the maximum-likelihood metrics for these structural codewords
can be equivalently fit into a recursive formula,
and hence, the priority-first search decoding algorithm can
be employed. As a consequence, the decoding complexity, as compared to the exhaustive
decoding, reduces considerably.
Extensions of our proposed coding structure that was originally
designed for channels with constant coefficients to channels
with varying channel coefficients within a codeword block
are also established. Simulations conclude that
our constructed extension code is robust even
for a channel whose coefficients vary more often than a coding block.

The paper is organized as follows.
Section \ref{SectionII} describes the system model considered,
followed by the technical backgrounds required in this work.
In Section \ref{Codeconstruction}, the coding rule that optimizes the system SNR
is established, and is subsequently used to construct
the codes for combined channel estimation
and error protection.
The corresponding recursive maximum-likelihood decoding metrics
for our rule-based systematic codes are derived in Section
\ref{SECML}. Simulations are summarized and remarked in
Section \ref{Simulationresults}.
Extension to channels with varying coefficients
within a codeword is presented in Section \ref{fastfading}.
Section \ref{Conclusions} concludes the paper.

In this work, superscripts ``$H$'' and ``$T$'' specifically
reserve for the representations of matrix Hermitian transpose and
transpose operations, respectively \cite{Har00}, and should not be
confused with the matrix exponent.

\section{Background}
\label{SectionII}

\subsection{System model and maximum-likelihood decoding criterion}\label{sectionIA}

The system model defined in this section and the notations used
throughout follow those in \cite{Skoglund}.

Transmit a codeword $\bb=[\begin{matrix}b_1,\cdots,b_N\end{matrix}]^T$, where each $b_j\in\{\pm 1\}$,
of a $(N,K)$ code $\cc$
over a block fading (specifically, quasi-static
fading) channel of memory order $(P-1)$.
Denote the channel coefficients by $\hb=[\begin{matrix}h_1,\cdots h_P\end{matrix}]^T$
that are assumed \emph{constant} within a coding block.
The complex-valued received vector is then given by:
\begin{equation}
\yb=\Bb\hb+\nb,\label{systemmodel}
\end{equation}
where $\nb$ is zero-mean complex-Gaussian distributed with
$E[\nb\nb^H]=\sigma_n^2\Ib_L$, $\Ib_L$ is the $L\times L$ identity matrix, and
$$
\Bb\triangleq\begin{bmatrix}
b_1    & 0      & \cdots & 0\\
\vdots & b_1    & \ddots & \vdots\\
b_N    & \vdots & \ddots & 0\\
0      & b_N    & \ddots & b_1\\
\vdots & \ddots & \ddots & \vdots\\
0      & 0      & \cdots & b_N
\end{bmatrix}_{L \times P}.
$$

Some assumptions are made in the following. Both the transmitter and the
receiver know nothing about the channel coefficients $\hb$, but have the
knowledge of multipath parameter $P$ or its upper bound.
Besides, there are adequate guard period between two encoding blocks
so that zero interblock interference is guaranteed.
Based on the system model in \eqref{systemmodel} and
the above assumptions, we can derive \cite{Skoglund} the least square (LS)
estimate of channel coefficients $\hb$ for a given $\bb$ (interchangeably, $\Bb$) as:
$$
\hat{\hb}=(\Bb^{T}\Bb)^{-1}\Bb^{T}\yb,
$$
and the joint maximum-likelihood (ML) decision on the transmitted codeword becomes:
\begin{equation}
\hat{\bb}=\arg\min_{\bb\in{\cc}}\|\yb-\Bb\hat{\hb}\|^2
=\arg\min_{\bb\in{\cc}}\|\yb-\Pb_B \yb\|^2 \label{MLdeci},
\end{equation}
where $\Pb_B=\Bb(\Bb^T\Bb)^{-1}\Bb^T$.
Notably, codeword $\bb$ and transformed codeword $\Pb_B$ is not one-to-one corresponding
unless the first element of $\bb$, namely $b_1$, is fixed.
For convenience, we will always set $b_1=-1$ for the codebooks we construct in the sequel.

\subsection{Summary of previous and our code designs for combined channel estimation and error protection}

In literatures, no systematic code constructions have been proposed for
combined channel estimation and error protection for
quasi-static fading channels.
Efforts were mostly placed on how to find the proper sequences to compensate
the channel fading by computer searches
\cite{Coskun}\cite{Giese}\cite{Ng}\cite{Park}\cite{Skoglund}\cite{Tellambura}\cite{Yang}.
Decodability for the perhaps structureless computer-searched codes
thus becomes an engineering challenge.

In 2003, Skoglund, Giese and Parkvall \cite{Skoglund} searched by computers for
nonlinear binary block codes suitable for combined estimation and error protection
for quasi-static fading channels by minimizing the sum of the pairwise error probabilities (PEP)
under equal prior, namely,
\begin{equation}
P_e\leq\frac 1{2^{K}}\sum_{i=1}^{2^K}\sum_{j=1,j\neq i}^{2^K}
\Pr\left(\left.\hat{\bb}=\bb(j)\right|\bb(i)\mbox{ transmitted}\right),\label{pwp}
\end{equation}
where $\bb(i)$ denotes the $i$th codeword of the $(N,K)$ nonlinear block code.
Although the operating signal-to-noise ratio (SNR) for the code search was set at $10$ dB,
their simulation results showed that the found codes perform well
in a wide range of different SNRs.
In addition, the mismatch in the relative powers of different channel coefficients,
as well as in the channel Rice factors \cite{TAG03}, has little effect on the resultant performance.
It was concluded that in comparison with
the system with the benchmark error correcting code and the perfect channel estimator,
significant performance improvement can be obtained
by adopting their computer-searched nonlinear codes.

Later in 2005, Coskun and Chugg \cite{Coskun} replaced the PEP in \eqref{pwp}
by a properly defined pairwise distance measure between two codewords, and proposed
a suboptimal greedy algorithm to speed up the code search process.
In 2007, Giese and Skoglund \cite{Giese} re-applied their original idea
to the single- and multiple-antenna systems, and
used the asymptotic PEP and the generic gradient-search algorithm
in place of the PEP and the simulated annealing algorithm in \cite{Skoglund}
to reduce the system complexity.

At the end of \cite{Skoglund}, the authors pointed out that ``an important
topic for further research is to study how the decoding complexity of the
proposed scheme can be decreased.'' They proceeded to state that along this
research line, ``one main issue is to investigate what kind of structure should
be enforced on the code to allow for simplified decoding.''

Stimulating from
these ending statements, we take a different approach for code design. Specifically, we
pursued and established a systematic code design rule for combined channel
estimation and error protection for quasi-static fading channels, and confirmed
that the codes constructed based on such rule maximize the average
system SNR. As so happened that the computer-searched code
in \cite{Skoglund} satisfies such rule, its insensitivity to SNRs, as
well as channel mismatch, somehow finds the theoretical footing. Enforced by
the systematic structure of our rule-based constructed codes, we can then
derive a recursive {\em maximum-likelihood} decoding metric for use of
priority-first search decoding algorithm. The decoding complexity is therefore
significantly decreased at moderate-to-high SNRs as contrary to the obliged
exhaustive decoder for the structureless computer-searched codes.

It is worth mentioning that although the codes searched by computers in \cite{Giese}\cite{Skoglund}
 target the unknown channels,
for which the channel coefficients are assumed constant in a coding block, the evaluation of
the PEP criterion does require to presume the knowledge of channel statistics.
The code constructed based on the rule we proposed, however, is guaranteed
to maximize the system SNR regardless of the statistics of
the channels. This hints that our code can still be well applied to the
situation where channel blindness becomes a strict system restriction.
Details will be introduced in subsequent sections.

\subsection{Maximum-likelihood priority-first search decoding algorithm}
\label{algorithm}

For a better understanding,
we give a short description of a code tree for the $(N,K)$ code $\mathcal{C}$
over which the decoding search is performed
before our describing the priority-first search decoding algorithm in this subsection.

A code tree of a $(N,K)$ binary code represents
every codeword as a path on a binary tree as shown in Fig.~\ref{codetree}.
The code tree consists of $(N+1)$ levels. The single leftmost node at level zero is usually
called the {\it origin node}. There are at most two branches
leaving each node at each level.  The $2^{K}$ rightmost nodes at level $N$ are called the
{\it terminal nodes}.

Each branch on the code tree is labeled with the appropriate
code bit $b_i$. As a convention, the path from the single origin node to one of the $2^K$ terminal nodes
is termed the {\it code path} corresponding to the
codeword. Since there is a one-to-one correspondence between the codeword and the code path of
$\mathcal{C}$, a codeword can be interchangeably referred to by its respective code path
or the branch labels that the code path traverses.
Similarly, for any node in the code tree,
there exists a unique path traversing from the single original node
to it; hence, a node can also be interchangeably indicated by the path (or the path labels) ending at it.
We can then denote the path ending at a node at level $\ell$
by the branch labels $[b_1,b_2,\cdots,b_\ell]$ it traverses.
For convenience, we abbreviates $[b_1,b_2,\cdots,b_\ell]^T$ as $\bb_{(\ell)}$,
and will drop the subscript when $\ell=N$.
The successor pathes of a path $\bb_{(\ell)}$
are those whose first $\ell$ labels are exactly the same as $\bb_{(\ell)}$.

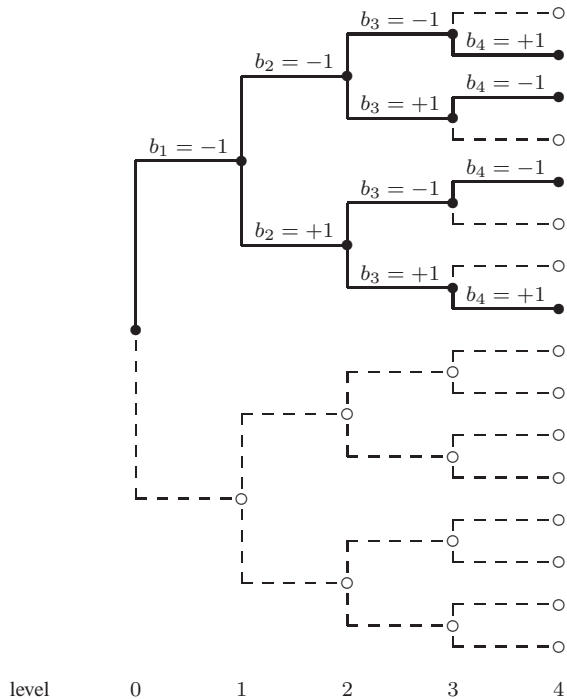
\begin{figure}[hbt]
\begin{center}
\setlength{\unitlength}{0.8pt}
\begin{picture}(250,330)(40,0)
\multiput(200,320)(10,0){5}{\line(1,0){5}}
\multiput(200,260)(10,0){5}{\line(1,0){5}}
\multiput(200,220)(10,0){5}{\line(1,0){5}}
\multiput(200,200)(10,0){5}{\line(1,0){5}}
\multiput(200,160)(10,0){5}{\line(1,0){5}}
\multiput(200,140)(10,0){5}{\line(1,0){5}}
\multiput(200,120)(10,0){5}{\line(1,0){5}}
\multiput(200,100)(10,0){5}{\line(1,0){5}}
\multiput(200,80)(10,0){5}{\line(1,0){5}}
\multiput(200,60)(10,0){5}{\line(1,0){5}}
\multiput(200,40)(10,0){5}{\line(1,0){5}}
\multiput(200,20)(10,0){5}{\line(1,0){5}}
\put(250,320){\circle{5}}
\put(250,300){\circle*{5}}
\put(250,280){\circle*{5}}
\put(250,260){\circle{5}}
\put(250,240){\circle*{5}}
\put(250,220){\circle{5}}
\put(250,200){\circle{5}}
\put(250,180){\circle*{5}}
\multiput(250,20)(0,20){8}{\circle{5}}
\put(200,315){\line(0,1){5}}
\put(200,260){\line(0,1){5}}
\put(200,220){\line(0,1){5}}
\put(200,195){\line(0,1){5}}
\multiput(200,20)(0,40){4}{\line(0,1){5}}
\multiput(200,35)(0,40){4}{\line(0,1){5}}
\multiput(150,150)(10,0){5}{\line(1,0){5}}
\multiput(150,110)(10,0){5}{\line(1,0){5}}
\multiput(150,70)(10,0){5}{\line(1,0){5}}
\multiput(150,30)(10,0){5}{\line(1,0){5}}
\multiput(200,190)(0,40){4}{\circle*{5}}
\multiput(200,30)(0,40){4}{\circle{5}}
\multiput(150,110)(0,10){2}{\line(0,1){5}}
\multiput(150,30)(0,10){2}{\line(0,1){5}}
\multiput(150,135)(0,10){2}{\line(0,1){5}}
\multiput(150,55)(0,10){2}{\line(0,1){5}}
\multiput(100,130)(10,0){5}{\line(1,0){5}}
\multiput(100,50)(10,0){5}{\line(1,0){5}}
\multiput(150,210)(0,80){2}{\circle*{5}}
\multiput(150,50)(0,80){2}{\circle{5}}
\multiput(100,50)(0,10){4}{\line(0,1){5}}
\multiput(100,95)(0,10){4}{\line(0,1){5}}
\multiput(50,90)(10,0){5}{\line(1,0){5}}
\put(100,90){\circle{5}}
\put(100,250){\circle*{5}}
\multiput(50,165)(0,-10){8}{\line(0,-1){5}}
\put(50,170){\circle*{5}}

\scriptsize{
\put(0,0){\makebox(0,0){level}}
\put(50,0){\makebox(0,0){$0$}}
\put(100,0){\makebox(0,0){$1$}}
\put(150,0){\makebox(0,0){$2$}}
\put(200,0){\makebox(0,0){$3$}}
\put(250,0){\makebox(0,0){$4$}}

\put(50,250){\makebox(50,13){$b_1=-1$}}

\put(100,210){\makebox(50,13){$b_2=+1$}}
\put(100,290){\makebox(50,13){$b_2=-1$}}

\put(150,190){\makebox(50,13){$b_3=+1$}}
\put(150,230){\makebox(50,13){$b_3=-1$}}
\put(150,270){\makebox(50,13){$b_3=+1$}}
\put(150,310){\makebox(50,13){$b_3=-1$}}

\put(200,180){\makebox(50,13){$b_4=+1$}}
\put(200,240){\makebox(50,13){$b_4=-1$}}
\put(200,280){\makebox(50,13){$b_4=-1$}}
\put(200,300){\makebox(50,13){$b_4=+1$}}
}

\thicklines
\put(200,300){\line(1,0){50}}
\put(200,280){\line(1,0){50}}
\put(200,240){\line(1,0){50}}
\put(200,180){\line(1,0){50}}

\put(200,300){\line(0,1){10}}
\put(200,270){\line(0,1){10}}
\put(200,230){\line(0,1){10}}
\put(200,180){\line(0,1){10}}

\multiput(150,190)(0,40){4}{\line(1,0){50}}
\multiput(150,190)(0,80){2}{\line(0,1){40}}
\multiput(100,210)(0,80){2}{\line(1,0){50}}
\put(100,210){\line(0,1){80}}
\put(50,250){\line(1,0){50}}
\put(50,170){\line(0,1){80}}
\end{picture}
\caption{The code tree for a computer-searched PEP-minimum $(4,2)$ code with $b_1$ fixed as $-1$.}\label{codetree}
\end{center}
\end{figure}

The priority-first search on a code tree
is guided by an evaluation function $f$ that is defined for every path.
It can be typically algorithmized as follows.

\bigskip

\begin{list}{Step~\arabic{step}.}
    {\usecounter{step}
    \setlength{\labelwidth}{1cm}
    \setlength{\leftmargin}{1.6cm}\slshape}
\item {\bf(Initialization)} Load the Stack with the path that ends at the original node.

\item \label{step} {\bf(Evaluation)} Evaluate the $f$-function values of the successor paths of the current top path in the
      Stack, and delete this top path from the Stack.

\item {\bf (Sorting)} Insert the successor paths obtained in Step~\ref{step} into the  Stack such
      that the paths in the Stack are ordered according to ascending $f$-function values of them.

\item {\bf (Loop)} If the top path in the Stack ends at a terminal node in the
      code tree, output the labels corresponding
      to the top path, and the algorithm stops; otherwise, go to Step~\ref{step}.
\end{list}

It remains to find the evaluation function $f$ that secures the maximum-likelihoodness
of the output codeword.
We begin with the introduction of a sufficient condition under which
the above priority-first search algorithm
guarantees to locate the code path with the smallest $f$-function value
among all code paths of $\cc$.

\bigskip

\begin{lemma}\label{lemma0} If $f$ is non-decreasing along every path $\bb_{(\ell)}$ in the code tree, i.e.,
\begin{equation}
\label{optimalcon}
f\left(\bb_{(\ell)}\right)\leq \min_{\left\{\tilde\bb\in\cc~:~\tilde\bb_{(\ell)}=\bb_{(\ell)}\right\}}f(\tilde\bb),
\end{equation}
the priority-first search algorithm always outputs
the code path with the smallest $f$-function value among all code paths of $\cc$.
\end{lemma}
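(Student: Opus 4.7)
The plan is to argue by contradiction. Suppose the algorithm halts by outputting a terminal code path $\hat{\bb}\in\cc$, but some other code path $\bb^*\in\cc$ satisfies $f(\bb^*)<f(\hat{\bb})$. I will derive a contradiction by combining the sorted-stack discipline of Step~3 with the non-decreasing hypothesis \eqref{optimalcon}.

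The main tool is the following Stack invariant: \emph{just after Step~3 of every iteration, for each code path $\tilde\bb\in\cc$ that has not yet been output, the Stack contains exactly one prefix $\tilde\bb_{(\ell)}$ of $\tilde\bb$, for some $0\le\ell\le N$.} This I would prove by induction on iterations. In the base case, after Initialization the Stack holds only the origin-node path, which is the length-$0$ prefix of every code path. For the inductive step, Step~2 deletes the current top prefix $\bb_{(\ell)}$ and Step~3 reinserts its two length-$(\ell+1)$ successors whenever $\ell<N$; hence each code path that previously had $\bb_{(\ell)}$ as its unique representative prefix still has exactly one (extended) prefix in the Stack afterwards. Code paths not passing through $\bb_{(\ell)}$ are unaffected, and when $\ell=N$ the popped top is itself a terminal code path either output at this iteration or left to a subsequent one, so the invariant continues to hold for every unfinished code path.

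Applying the invariant at termination: just before Step~4 outputs $\hat{\bb}$, the terminal path $\hat{\bb}$ sits at the top of the Stack and no code path has yet been output, so by the invariant some prefix $\bb^*_{(m)}$ of $\bb^*$ also lies in the Stack. Since Step~3 keeps the Stack sorted in ascending order of $f$, the top element has the smallest $f$-value, whence $f(\hat{\bb})\le f(\bb^*_{(m)})$. The non-decreasing hypothesis \eqref{optimalcon} applied to the prefix $\bb^*_{(m)}$ of $\bb^*$ yields $f(\bb^*_{(m)})\le f(\bb^*)$, so $f(\hat{\bb})\le f(\bb^*)$, contradicting $f(\bb^*)<f(\hat{\bb})$. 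The main obstacle is phrasing and verifying the Stack invariant carefully---specifically, checking that the two-branch expansion plus sorted insertion always preserves the ``exactly one prefix per unfinished code path'' property. Once the invariant is in hand, \eqref{optimalcon} converts it directly into the desired optimality.
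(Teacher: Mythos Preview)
Your argument is correct and follows essentially the same route as the paper's proof: both use that the output path has minimal $f$-value among Stack entries (Step~3 sorting), that every other code path has some ancestor currently in the Stack, and then apply \eqref{optimalcon} to bound $f$ of that code path below by $f$ of its ancestor. You have simply made the ``ancestor-in-Stack'' claim explicit via an inductive invariant and framed the conclusion as a contradiction, whereas the paper states it directly; one minor wording caveat is that a node in the code tree need not have exactly two successors, but this does not affect the validity of your invariant.
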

\begin{proof}
Let $\bb^\ast$ be the first top path that
reaches a terminal node (and hence, is the output code path
of the priority-first search algorithm.)
Then, {\em Step 3} of the algorithm ensures that
$f\left(\bb^\ast\right)$ is no larger than
the $f$-function value of any path currently in the Stack.
Since condition \eqref{optimalcon} guarantees that the $f$-function value
of any other code path, which should be the offspring of some path $\bb_{(\ell)}$
existing in the Stack, is no less than $f\left(\bb_{(\ell)}\right)$,
we have
$$f\left(\bb^\ast\right)\leq f\left(\bb_{(\ell)}\right)\leq
\min_{\left\{\tilde\bb\in\cc~:~\tilde\bb_{(\ell)}=\bb_{(\ell)}\right\}}f(\tilde\bb).$$
Consequently, the lemma follows.
\end{proof}

\bigskip

In the design of the search-guiding function $f$,
it is convenient to divide it into the sum of two parts. In order to perform maximum-likelihood decoding,
the first part $g$ can be directly defined based on the maximum-likelihood
metric of the codewords such that from \eqref{MLdeci},
$$
\arg\min_{\bb\in\cc}g(\bb)=\arg\min_{\bb\in{\cc}}\|\yb-\Pb_B \yb\|^2.
$$
After $g$ is defined, the second part $h$ can be designed to
validate \eqref{optimalcon} with $h(\bb)=0$ for any $\bb\in\cc$.
Then, from $f(\bb)=g(\bb)+h(\bb)=g(\bb)$ for all $\bb\in\cc$,
the desired maximum-likelihood priority-first search decoding
algorithm is established since \eqref{optimalcon} is valid.

In principle, both $g(\cdot)$
and $h(\cdot)$ range over all possible paths in the code tree.
The first part, $g(\cdot)$, is simply a function of all the branches
traversed thus far by the path, while the second part, $h(\cdot)$, called the
{\it heuristic function}, helps predicting
a future route from the end node of the current path to a
terminal node \cite{Han}. Notably, the design of the heuristic function $h$
that makes valid condition \eqref{optimalcon} is not unique. Different designs may result in variations in computational complexity.

We close this section by summarizing the target of this work based on what
have been mentioned in this section.
\begin{enumerate}
\item \label{item1} A code of comparable performance to the computer-searched code
is constructed according to certain rules so that its code tree can be efficiently and
systematically generated (Section~\ref{Codeconstruction}).

\item \label{item2} Efficient recursive computation of the maximum-likelihood evaluation function $f$
from the predecessor path to the successor paths is established (Section~\ref{SECML}).

\item With the availability of items \ref{item1} and \ref{item2}, the construction and maximum-likelihood decoding
of codes with longer codeword length becomes possible, and hence, makes the assumption
that the unknown channel coefficients $\hb$ are fixed during a long coding block somewhat impractical
especially for mobile transceivers. Extension of items \ref{item1} and \ref{item2}
to the unknown channels whose channel coefficients may change several times during one coding block
will be further proposed (Section~\ref{fastfading}).
\end{enumerate}

\section{Code Construction}
\label{Codeconstruction}

In this section,
the code design rule that guarantees the maximization of
the system SNR regardless of the channel
statistics is presented, followed by the algorithm to generate
the code based on such rule.

\subsection{Code rule that maximizes the average SNR}
\label{IIIA}

A known inequality \cite{Patel} for the multiplication of two positive semidefinite
Hermitian matrices, $\Ab$ and $\Bb$, is that
\begin{eqnarray}
\tr(\Ab\Bb)\leq \tr(\Ab)\cdot \lambda_{\max}(\Bb),\label{ineq1}
\end{eqnarray}
where $\tr(\cdot)$ represents the matrix trace operation, and
$\lambda_{\max}(\Bb)$ is the maximal eigenvalue of
$\Bb$ \cite{Har00}. The above inequality holds with equality when $\Bb$ is an identity matrix.

From the system model $\yb=\Bb\hb+\nb$, it can be derived that
the average SNR satisfies:
\begin{eqnarray}
\mbox{Average SNR}&=&\frac{E[\|\Bb\hb\|^2]}{E[\|\nb\|^2]}\nonumber\\
&=&\frac{E[\tr(\hb^H\Bb^T\Bb\hb)]}{L\sigma_n^2}\nonumber\\
&=&\frac{\tr(E[\hb\hb^H]\Bb^T\Bb)}{L\sigma_n^2}\nonumber\\
&=&\frac{N}{L}\frac 1{\sigma_n^2}\tr\left(E[\hb\hb^H]\frac{1}{N}\Bb^T\Bb\right)\nonumber\\
&\leq&\frac{N}{L}\frac 1{\sigma_n^2}\tr(E[\hb\hb^H])\lambda_{\max}\left(\frac{1}{N}\Bb^T\Bb\right).\nonumber
\end{eqnarray}
Then, the theories on Ineq.~\eqref{ineq1} result
that taking
\begin{equation}
\label{coderule}
\frac 1N\Bb^T\Bb=\Ib_P\triangleq\begin{bmatrix}
1&0&\cdots&0\\
0&1&\cdots&0\\
\vdots&\vdots&\ddots&\vdots\\
0&0&\cdots&1
\end{bmatrix}_{P\times P}
\end{equation}
will optimize the average SNR regardless
of the statistics of $\hb$ \cite{Ganesan}.

Existence of codeword sequences satisfying \eqref{coderule}
is promised only for $P=2$ with $N$ odd (and trivially, $P=1$).
In some other cases such as $P=3$, one can only design codes to approximately satisfy \eqref{coderule}
as:
$$
\frac 1N\Bb^T\Bb=\begin{bmatrix}
1     & \pm\dis\frac{1}N & 0     \\
\pm\dis\frac{1}N & 1     & \pm\dis\frac{1}N \\
0     & \pm\dis\frac{1}N & 1
\end{bmatrix}\mbox{ for }N\mbox{ even,}$$
and
$$\frac 1N\Bb^T\Bb=
\begin{bmatrix}
1     & 0 & \pm\dis\frac{1}N      \\
0& 1     &0 \\
 \pm\dis\frac{1}N&0 & 1
\end{bmatrix}\mbox{ for }N\mbox{ odd}.
$$
Owing to this observation, we will relax \eqref{coderule}
to allow some off-diagonal entries
in $\Bb^T\Bb$ to be either $1$ or $-1$ whenever a strict maintenance of \eqref{coderule}
is impossible.

Empirical examination by simulated-annealing code-search algorithm
shows that for $4\leq N\leq 16$ and $N$ even,
the best half-rate codes that minimize the sum of PEPs in \eqref{pwp}
under\footnote{The adopted statistical parameters of $\hb$
follow those in \cite{Skoglund}.} complex zero-mean Gaussian distributed $\hb$ with $E[\hb\hb^H]=(1/2)\Ib_P$ and $P=2$
all satisfy that
\begin{equation}
\label{coderule1}
\Bb^T\Bb=\begin{bmatrix}N&\pm 1\\\pm 1&N\end{bmatrix}
\end{equation}
except three codewords at $N=14$.
A possible cause for the appearance of three exception
codewords at $N=14$ is that the best code that minimizes
the sum of the pairwise error probabilities
may not be the truly optimal code that
reaches the smallest error probability, and hence, does not necessarily
yield the maximum average SNR as demanded by \eqref{coderule}. We have also
obtained and examined the computer-searched code used in \cite{Skoglund} for $N=22$,
and found as anticipated that every codeword carries the property
of \eqref{coderule1}.

The operational meaning of the condition $\Bb^T\Bb=N\cdot\Ib_P$
is that the codeword is orthogonal to its shifted counterpart,
and hence, a space-diversity nature is implicitly enforced.
This coincides with the conclusion made in \cite{Crozier}
that the training sequence satisfying that $\Bb^T\Bb$ is proportional
to $\Ib_P$ can provide optimal channel estimation performance.
It should be mentioned that codeword condition \eqref{coderule}
has been identified in \cite{Giese}, and the authors in \cite[pp.~1591]{Giese} remarked
that a code sequence with certain aperiodic autocorrelation property
can possibly be exploited in future code design approaches, which is
one of the main research goals of this paper.

\subsection{Equivalent system model for combined channel estimation and error protection
codes}

\begin{figure}[tbp]
\setlength{\unitlength}{0.8pt}
\begin{picture}(480,150)(0,50)
\put(40,82){\makebox(110,0){\fcolorbox[rgb]{0.9,1,0.9}{0.9,1,0.9}{$~~~~~~~~~~~~~~~~~~~~$}}}
\put(40,90){\makebox(110,0){\fcolorbox[rgb]{0.9,1,0.9}{0.9,1,0.9}{$~~~~~~~~~~~~~~~\nb~~$}}}
\put(40,100){\makebox(110,0){\fcolorbox[rgb]{0.9,1,0.9}{0.9,1,0.9}{$~~~~~~~~~~~~~~~~~~~~$}}}
\put(40,107){\makebox(110,0){\fcolorbox[rgb]{0.9,1,0.9}{0.9,1,0.9}{$~~~~~~~~~~~~~~~~~~~~$}}}
\put(40,114){\makebox(110,0){\fcolorbox[rgb]{0.9,1,0.9}{0.9,1,0.9}{$~~~~~~~~~~~~~~~~~~~~$}}}
\put(40,121){\makebox(110,0){\fcolorbox[rgb]{0.9,1,0.9}{0.9,1,0.9}{$~~~~~~~~~~~~~~~~~~~~$}}}
\put(40,128){\makebox(110,0){\fcolorbox[rgb]{0.9,1,0.9}{0.9,1,0.9}{$~~~~~~~~~~~~~~~~~~~~$}}}
\put(40,140){\makebox(110,0){\fcolorbox[rgb]{0.9,1,0.9}{0.9,1,0.9}{~~~~$\hb$~~~~~~~~~$\oplus$~~}}}
\put(40,152){\makebox(110,0){\fcolorbox[rgb]{0.9,1,0.9}{0.9,1,0.9}{$~~~~~~~~~~~~~~~~~~~~$}}}
\put(40,153){\makebox(110,0){\fcolorbox[rgb]{0.9,1,0.9}{0.9,1,0.9}{$~~~~~~~~~~~~~~~~~~~~$}}}
\put(40,160){\makebox(110,0){\fcolorbox[rgb]{0.9,1,0.9}{0.9,1,0.9}{$~~~~~~~~~~~~~~~~~~~~$}}}
\put(40,167){\makebox(110,0){\fcolorbox[rgb]{0.9,1,0.9}{0.9,1,0.9}{$~~~~~~~~~~~~~~~~~~~~$}}}
\put(40,174){\makebox(110,0){\fcolorbox[rgb]{0.9,1,0.9}{0.9,1,0.9}{$~~~~~~~~~~~~~~~~~~~~$}}}

\put(0,140){\vector(1,0){50}} \put(50,120){\framebox(40,40)}
\put(90,140){\vector(1,0){33}} \put(129,100){\vector(0,1){31}}
\put(135,140){\vector(1,0){45}}
\put(180,110){\framebox(120,60){\shortstack{Outer
product\\demodulator}}} \put(300,140){\vector(1,0){50}}
\put(350,110){\framebox(150,60){\shortstack{Minimum
Euclidean\\Distance Selector}}} \put(500,140){\vector(1,0){50}}
\put(40,80){\dashbox(110,98)} \put(30,55){\dashbox(280,135)}
\put(30,55){\makebox(280,25){\bf Equivalent Channel}}

\put(0,145){\makebox(0,0)[b]{$\bb$}}
\put(310,145){\makebox(40,0)[b]{$\yb\yb^H$}}
\put(520,145){\makebox(30,0)[b]{$\hat\bb(\yb\yb^H)$}}
\end{picture}
\caption{Equivalent system model for combined channel estimation
and error protection codes. } \label{fig:equivalent_channel}
\end{figure}
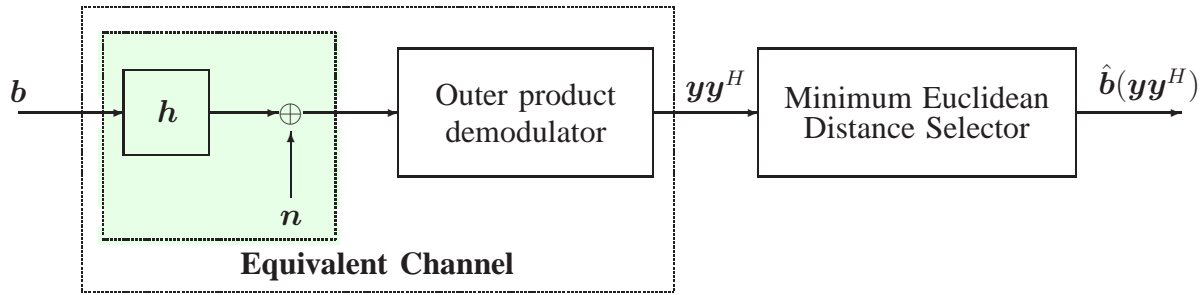

By noting\footnote{
The validity of the claimed statement here does not require the SNR-optimization condition $\Bb^T\Bb=N\Ib_P$.}
that $\Pb_B$ is idempotent and symmetric,
and both $\tr(\Pb_B)$ and $\|\vectorize(\Pb_B)\|^2$ are equal to $P$,
where $\vectorize(\cdot)$ denotes the operation to transform an
$(M\times N)$ matrix into a $(MN\times 1)$ vector,\footnote{
$\vectorize(\Ab)$ for a matrix $\Ab$ is defined as:
$$
\vectorize(\Ab)=\vectorize\left(\begin{bmatrix}
a_{1,1}& a_{1,2}& \cdots& a_{1,S}\\
\vdots & \vdots & \cdots& \vdots \\
a_{R,1}& a_{R,2}& \cdots& a_{R,S}
\end{bmatrix}\right)=\begin{bmatrix}a_{1,1}&\cdots&a_{R,1}&a_{1,2}&\cdots&a_{R,S}\end{bmatrix}^T.
$$}
the best joint maximum-likelihood decision in \eqref{MLdeci}
can be reformulated as:
\begin{eqnarray}
\hat{\bb}&=&\arg\min_{\bb\in{\mathcal{C}}}\|\yb-\Pb_B \yb\|^2\nonumber\\
&=&\arg\min_{\bb\in{\mathcal{C}}}(\yb-\Pb_B \yb)^H(\yb-\Pb_B
\yb)\nonumber\\
&=&\arg\min_{\bb\in{\mathcal{C}}}(\yb^H\yb-\yb^H\Pb_B
\yb)\nonumber\\
&=&\arg\min_{\bb\in{\mathcal{C}}}-\tr(\Pb_B\yb\yb^H)\label{PByyH}\\
&=&\arg\min_{\bb\in{\mathcal{C}}}
\left(\|\vectorize(\yb\yb^H)\|^2-\vectorize(\Pb_B)^T\vectorize(\yb\yb^H)-\vectorize(\yb\yb^H)^H\vectorize(\Pb_B)+\|\vectorize(\Pb_B)\|^2\right)\nonumber\\
&=&\arg\min_{\bb\in{\mathcal{C}}}\|\vectorize(\yb\yb^H)-\vectorize(\Pb_B)\|^2.\label{yy_Pb}
\end{eqnarray}
We therefore transform the original system in \eqref{systemmodel}
to an equivalent system model that contains an \emph{outer product
demodulator} and a \emph{minimum Euclidean distance selector}
at the $\Pb_B$-domain as shown in
Fig.~\ref{fig:equivalent_channel}. As the outer product
demodulator can be viewed as a generalization of the
\emph{square-law combining} that is of popular use in non-coherent
detection for both slow and fast fading \cite{Proakis}, the above
equivalent transformation suggests a potential application of
combined channel estimate and error protection codes for the
non-coherent system in which the fading is rapid enough to
preclude a good estimate of the channel coefficients. Further
discussion on how to design codes for unknown fast-fading channels
will be continued in Section~\ref{fastfading}.

As a consequence of \eqref{yy_Pb}, the maximum-likelihood decoding is to find the codeword
$\Pb_{B}$ whose
Euclidean distance to $\yb\yb^H$ is the smallest.
Similar to \eqref{pwp}, we can then bound the error probability by:
\begin{eqnarray}
&&\!\!\!\!\!\!\!\!\!P_e\leq\frac 1{2^{K}}\sum_{i=1}^{2^K}\sum_{j=1\atop j\neq i}^{2^K}
\Pr\!\left(\left.\|\vectorize(\yb\yb^H)-\vectorize(\Pb_{B(j)})\|^2
\!<\!\|\vectorize(\yb\yb^H)-\vectorize(\Pb_{B(i)})\|^2\right|\bb(i)\mbox{ transmitted}\right).\nonumber\\\label{PEPbound}
\end{eqnarray}
The PEP-based upper bound in \eqref{PEPbound} hints that
a good code design should have an adequately large pairwise Euclidean distance
\begin{equation}
\label{pwdis}
\|\vectorize(\Pb_{B(i)})-\vectorize(\Pb_{B(j)})\|^2
\end{equation}
among all codeword pairs $\Pb_{B(i)}$ and $\Pb_{B(j)}$, where $\Pb_{B(i)}$ is
the equivalent codeword at the $\Pb_B$-domain, and is one-to-one corresponding
to the codeword $\bb(i)$ if $b_1$ is fixed and known. Based on this
observation, we may infer under equal prior that a uniform draw of codewords
satisfying $\Bb^T\Bb=N\cdot\Ib_P$ at the $\Pb_B$-domain may asymptotically result
in a good code. In light of the one-to-one
correspondence relation between $\bb$ and $\Pb_B$, we may further infer that
uniform selection of codewords in the set of
$$\ac\triangleq\left\{\Pb_B=\Bb(\Bb^T\Bb)^{-1}\Bb~:~\exists \bb\in\{\pm 1\}^N \mbox{ such that
}\Bb^T\Bb=N\cdot\Ib_P\right\}$$
is conceptually equivalent to uniform-pick of codewords in
$\{\bb\in\{\pm 1\}^N:\Bb^T\Bb=N\Ib_P\}$.

Recall that in order to perform the priority-first search decoding on a code tree,
an efficient and systematic way to generate the code tree (or more specifically,
an efficient and systematic way
to generate the successor paths of the top path) is necessary.
The uniform pick principle then suggests that considering only the codewords with the same prefix $[b_1,\cdots,b_\ell]$,
the ratio of the number of codewords satisfying $b_{\ell+1}=-1$
with respect to the candidate sequence pool
shall be made equal to
that of codewords satisfying $b_{\ell+1}=1$, whenever possible.
This can be mathematical interpreted as:
\begin{equation}
\label{ratio1}
\frac{|\cc(b_1,b_2,\cdots,b_\ell,+1)|}{\left|\ac(b_1,b_2,\cdots,b_\ell,+1|N\cdot\Ib_P)\right|}
\approx\frac{|\cc(b_1,b_2,\cdots,b_\ell,-1)|}{\left|\ac(b_1,b_2,\cdots,b_\ell,-1|N\cdot\Ib_P)\right|},
\end{equation}
where
$\cc(\bb_{(\ell)})$ is the set of all codewords
whose first $\ell$ bits equal $b_1,b_2,\cdots,b_\ell$,
and
$\ac(\bb_{(\ell)}|\Gb)$ is the set of all possible
$\pm 1$-sequences of length $N$, whose first $\ell$ bits equal
$b_1,b_2,\cdots,b_\ell$ and whose $\Bb$-representation satisfies
$\Bb^T\Bb=\Gb$.
Accordingly, if $|\ac(\bb_{(\ell)}|\Gb)|$ can be computed explicitly,
the desired efficient and systematic generation of the code tree becomes
straightforward.

Simulations on the above uniform selective code over $\{\bb\in\{\pm 1\}^N:\Bb^T\Bb=N\Ib_P\}$
show that its performance is almost the same as the computer-searched
code that minimizes the sum of PEPs.
Hence, the maximizing-the-pairwise-Euclidean-distance intuition we adopt for code design
performs well as we have anticipated.

In the next subsection, we will provide an exemplified encoding algorithm based on
the above basic rule specifically for channels of memory order $1$, namely, $P=2$.
The encoding algorithm for
larger memory order can be similarly built.

\subsection{Exemplified encoding algorithm for channels of memory order one}\label{encoding}
\label{Encode-Subsection}

Before the presentation
of the exemplified encoding algorithm, the explicit formula for $|\ac(\bb_{(\ell)}|\Gb)|$
needs to be established first.\footnote{ $|\ac(\bb_{(\ell)}|\Gb)|$ may not
have an explicit close-form formula for memory order higher than one. However, our
encoding algorithm can still be applied as long as
$|\ac(\bb_{(\ell)}|\Gb)|$ can be pre-calculated (cf.~Appendix). }

\bigskip

\begin{lemma}\label{Lem1}
Fix $P=2$.
Then, for $N$ odd, and $\Gb=N\cdot\Ib_P$,
\begin{eqnarray*}
|\ac(\bb_{(\ell)}|\Gb)|=
  \dis\binom{N-\ell}{(N-\ell-m_\ell)/2}
\oneb\left\{\left|m_\ell\right|\leq N-\ell\right\}\mbox{ for }1\leq\ell\leq N,
\end{eqnarray*}
where
$\oneb\{\cdot\}$ is the set indicator function, and
$m_\ell\triangleq(b_1b_2+\cdots+b_{\ell-1}b_{\ell})\cdot\oneb\{\ell>1\}$.

In addition, for $N$ even, and
$$\Gb_1\triangleq\begin{bmatrix}N&-1\\-1&N\end{bmatrix}
\quad\mbox{and}\quad \Gb_2\triangleq\begin{bmatrix}N&1\\1&N\end{bmatrix},$$
\begin{eqnarray*}
|\ac(\bb_{(\ell)}|\Gb_\theta)|=
  \dis\binom{N-\ell}{(N-\ell+(-1)^\theta-m_\ell)/2}
\oneb\left\{\left|(-1)^\theta-m_\ell\right|\leq N-\ell\right\}\mbox{ for }1\leq\ell\leq N.
\end{eqnarray*}
Here, we assume that $\binom{0}{0}=1$ specifically for the case of $\ell=N$.
\end{lemma}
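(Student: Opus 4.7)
The plan is to reduce the cardinality $|\ac(\bb_{(\ell)}|\Gb)|$ to counting $\pm 1$-strings with a prescribed sum, through the change of variables $c_i\triangleq b_ib_{i+1}$. First I would compute $\Bb^T\Bb$ explicitly for $P=2$: a direct expansion from the definition of $\Bb$ gives
\begin{equation*}
\Bb^T\Bb=\begin{bmatrix}N & s\\ s & N\end{bmatrix},\qquad s\triangleq\sum_{i=1}^{N-1}b_ib_{i+1},
\end{equation*}
since the diagonal entries collect $b_i^2=1$ and the single off-diagonal entry collects the consecutive products. Hence $\Bb^T\Bb=N\Ib_P$ is equivalent to $s=0$, which is achievable only when $N-1$ is even (the odd-$N$ case), while $\Bb^T\Bb=\Gb_\theta$ is equivalent to $s=(-1)^\theta$, achievable only when $N-1$ is odd (the even-$N$ case). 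This already accounts for why the two regimes are stated separately.

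Next I would set $c_i\triangleq b_ib_{i+1}\in\{\pm 1\}$ for $i=1,\ldots,N-1$. Since $b_1$ is fixed (at $-1$), the map $(b_2,\ldots,b_N)\leftrightarrow(c_1,\ldots,c_{N-1})$ is a bijection onto $\{\pm 1\}^{N-1}$, and the prefix $\bb_{(\ell)}$ pins down exactly the head $(c_1,\ldots,c_{\ell-1})$ with partial sum equal to $m_\ell$. Counting elements of $\ac(\bb_{(\ell)}|\Gb)$ therefore reduces to counting tails $(c_\ell,\ldots,c_{N-1})\in\{\pm 1\}^{N-\ell}$ whose entries sum to the required residual, namely $-m_\ell$ in the odd-$N$ case and $(-1)^\theta-m_\ell$ in the even-$N$ case.

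This is a standard count: among $N-\ell$ signs, if $k$ of them equal $-1$ then their sum is $(N-\ell)-2k$, so matching the target fixes $k$ uniquely and yields $\binom{N-\ell}{k}$ tails whenever $0\leq k\leq N-\ell$. A short arithmetic check gives $k=(N-\ell+m_\ell)/2$ in the odd case and $k=(N-\ell+m_\ell-(-1)^\theta)/2$ in the even case; invoking the symmetry $\binom{N-\ell}{k}=\binom{N-\ell}{N-\ell-k}$ converts these into the indices $(N-\ell-m_\ell)/2$ and $(N-\ell+(-1)^\theta-m_\ell)/2$ stated in the lemma. The feasibility constraint $0\leq k\leq N-\ell$ translates exactly into $|m_\ell|\leq N-\ell$ and $|(-1)^\theta-m_\ell|\leq N-\ell$, while integrality of $k$ is automatic from parities: $m_\ell$ has the parity of $\ell-1$, and $N-\ell$ has the parity forced by the stated parity of $N$, so the relevant expressions are always even.

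I expect no serious obstacle beyond bookkeeping. The only delicate points are (i) the boundary case $\ell=1$, where the convention $m_1=0$ from the $\oneb\{\ell>1\}$ factor makes the whole string $(c_1,\ldots,c_{N-1})$ free; (ii) the boundary case $\ell=N$, where the tail is empty and one must read $\binom{0}{0}=1$ together with the feasibility indicator as ``$1$ if $m_N$ hits the target, else $0$''; and (iii) keeping the sign $(-1)^\theta$ consistent as it shuttles between the target residual $s=(-1)^\theta$ and the resulting binomial index.
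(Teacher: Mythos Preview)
Your argument is correct and is essentially the same as the paper's: both reduce the constraint $\Bb^T\Bb=\Gb$ to a prescribed value of $\sum_{i=1}^{N-1}b_ib_{i+1}$, use the bijection between $(b_{\ell+1},\ldots,b_N)$ and the product string $(b_\ell b_{\ell+1},\ldots,b_{N-1}b_N)$ for given $b_\ell$, and then count $\pm 1$-tails with a specified sum via a single binomial coefficient. Your write-up is more detailed on the parity and boundary bookkeeping, but the method is identical.
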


\begin{proof}
The lemma requires
\begin{equation}
c=b_1b_2+b_2b_3+\cdots+b_{N-1}b_N=m_\ell+b_\ell b_{\ell+1}+\cdots+b_{N-1}b_N,\label{lemma1}
\end{equation}
where $c=0,-1,+1$ respectively for $\Gb$, $\Gb_1$ and $\Gb_2$.
In order to satisfy \eqref{lemma1}, there should be
$(N-\ell+c-m_\ell)/2$ of $\{b_\ell b_{\ell+1},b_{\ell+1}b_{\ell+2},\cdots,b_{N-1}b_N\}$
equal to $1$, and the remaining of them equal $-1$, provided that
$(N-\ell+c-m_\ell)/2\geq 0$ and $(N-\ell)-(N-\ell+c-m_\ell)/2\geq 0$.
Notably, $(N-\ell+c-m_\ell)$ is always an even number for the cases considered in the lemma.
The proof is then completed by the observation that $[b_\ell b_{\ell+1},b_{\ell+1}b_{\ell+2},\cdots,b_{N-1}b_N]$ and
$[b_{\ell+1},b_{\ell+2},\cdots,b_N]$ are one-to-one correspondence
for given $b_\ell$.
\end{proof}

\bigskip

It is already hint in the above lemma that for $N$ odd, our encoding algorithm will uniformly pick
$2^K$ codewords
from the candidate sequences satisfying the exact SNR-maximization condition $\Bb^T\Bb=N\cdot\Ib_P$.
However, for $N$ even, two conditions on candidate sequences will be used.
Half of the codewords will be uniformly drawn
from those candidate sequences satisfying $\Bb^T\Bb=\Gb_1$, and
the other half of the codewords agree with $\Bb^T\Bb=\Gb_2$.
The proposed codeword selection process is simply to list all the candidate sequences
in binary-alphabetical order, starting from zero,
and uniformly pick the codewords
from the ordered list in every $\Delta$ interval, where
$$\Delta=\left\lfloor\frac{|\ac(b_1=-1|\Hb)|-1}{2^{K}/\Theta-1}\right\rfloor,$$
where $\Hb$ represents the desired $\Bb^T\Bb$,
and $\Theta$ is the number of conditions and equals $1$ for $N$ odd,
and $2$ for $N$ even.
As a result, the selected codewords are those sequences
with index $i\times\Delta$ for integer $i$.
The encoding algorithm is summarized in the following.

\bigskip

\begin{list}{Step~\arabic{step}.}
    {\usecounter{step}
    \setlength{\labelwidth}{1cm}
    \setlength{\leftmargin}{1.6cm}\slshape}
\item\label{Step1} {\bf (Input)} Let $i$ be the index of the requested codeword in
the desired $(N,K)$ block code, where $0\leq i\leq 2^K-1$.

\item {\bf (Initialization)}
Set $\Theta=1$ for $N$ odd, and $2$ for $N$ even. Let $b_1=-1$.
Put (in terms of the notations in Lemma~\ref{Lem1}):
$$
\Hb=\left\{\begin{array}{ll}
\Gb,&\mbox{if }N\mbox{ odd};\\
\Gb_1,&\mbox{if }N\mbox{ even and }0\leq i<2^{K-1};\\
\Gb_2,&\mbox{if }N\mbox{ even and }2^{K-1}\leq i<2^K.
\end{array}\right.
$$
Compute
$$\Delta=\left\lfloor\frac{|\ac(b_1|\Hb)|-1}{2^{K}/\Theta-1}\right\rfloor.$$
Also, re-adjust $i=i-2^{K-1}$ if $N$ is even and $2^{K-1}\leq i<2^K$.

Let the minimum candidate sequence index $\rho_{\min}$ and the maximum candidate sequence index
$\rho_{\max}$
in $\ac(b_1|\Hb)$ be respectively
$$\rho_{\min}=0\quad{\rm and}\quad
\rho_{\max}=|\ac(b_1|\Hb)|-1.$$

Initialize $\ell=1$ and $\rho=i\times\Delta$.

\item {\bf (Generation of the next code bit)}\label{Step3}

Set $\ell=\ell+1$, and compute
$\gamma=|\ac(\bb_{(\ell-1)},-1|\Hb)|$.

If $\rho<\rho_{\min}+\gamma$, then the next code bit $b_{\ell}=-1$, and
re-adjust $\rho_{\max}=\rho_{\min}+\gamma-1$;

else, the next code bit $b_{\ell}=1$, and re-adjust $\rho_{\min}=\rho_{\min}+\gamma$.

\item {\bf (Loop)}
If $\ell=N$, output codeword $\bb$, and the algorithm stops; otherwise, go to Step \ref{Step3}.
\end{list}

\section{Maximum-Likelihood Priority-First Search Decoding of Combined Channel Estimation
and Error Protection Codes}
\label{SECML}

In this section, a recursive maximum-likelihood metric $g$
and its heuristic function $h$
for use of the
priority-first search decoding algorithm to decode
the structural codewords over multiple code trees
are established.

\subsection{Recursive maximum-likelihood metric $g$ for priority-first search over multiple code trees}
\label{Rmmg}

Let $\cc_\theta$ be the set of the codewords that satisfy
$\Bb^T\Bb=\Gb_\theta$, where $1\leq\theta\leq\Theta$, and
assume that $\cc=\cup_{1\leq\theta\leq\Theta}\cc_\theta$, and
$\cc_\theta\cap\cc_\eta=\emptyset$
whenever $\theta\neq\eta$.
Then, by denoting for convenience $\Db_\theta=\Gb_\theta^{-1}$,
we can continue the derivation of the maximum-likelihood criterion from \eqref{PByyH} as:
\begin{eqnarray}
\hat{\bb}&=&\arg\min_{\bb\in\cc}\sum_{\theta=1}^\Theta
\left[-\tr(\Bb\Db_\theta\Bb^T\yb\yb^H)\right]\bf{1}\{\bb\in\cc_\theta\}\nonumber\\
&=&\arg\min_{\bb\in{\mathcal{C}}}\sum_{\theta=1}^\Theta
\left[-\vectorize(\Bb\Db_\theta\Bb^T)^T\vectorize(\yb\yb^H)\right]\bf{1}\{\bb\in\cc_\theta\}\nonumber\\
&=&\arg\min_{\bb\in{\mathcal{C}}}\sum_{\theta=1}^\Theta
\left[-\tr\left([(\Bb\otimes\Bb)\vectorize(\Db_\theta)]^T\vectorize(\yb\yb^H)\right)\right]\bf{1}\{\bb\in\cc_\theta\}\nonumber\\
&=&\arg\min_{\bb\in{\mathcal{C}}}\sum_{\theta=1}^\Theta
\left[-\tr\left(\vectorize(\Db_\theta)^T(\Bb^T\otimes\Bb^T)\vectorize(\yb\yb^H)\right)\right]\bf{1}\{\bb\in\cc_\theta\}\nonumber\\
&=&\arg\min_{\bb\in{\mathcal{C}}}\sum_{\theta=1}^\Theta
\left[-\tr\left((\Bb\otimes\Bb)^T\vectorize(\yb\yb^H)\vectorize(\Db_\theta)^T\right)\right]\bf{1}\{\bb\in\cc_\theta\},\label{trBByyHH}
\end{eqnarray}
where ``$\otimes$'' is the Kronecker product, and ${\bf
1}\{\cdot\}$ is the set indicator function that has been used in
Lemma~\ref{Lem1}. Defining
$$\Eb\triangleq\begin{bmatrix}
0 & 0 \cdots & 0 & 0\\
1 & 0 \ddots & 0 & 0\\
0 & 1 \ddots & 0 & 0\\
0 & 0 \cdots & 1 & 0
\end{bmatrix}_{L \times L}
\quad{\rm and}\quad
\cb=\begin{bmatrix}b_1\\\vdots\\b_N\\0\\ \vdots\\
0\end{bmatrix}_{L\times 1},
$$
we get:
\begin{eqnarray*}
\lefteqn{\Bb \otimes \Bb}\\&=&\begin{bmatrix}\cb\otimes\Bb &
(\Eb\cb)\otimes \Bb & \cdots & (\Eb^{P-1}\cb)\otimes \Bb\end{bmatrix}\\
&=&\begin{bmatrix}\cb\otimes\cb & \cb\otimes(\Eb\cb) & \cdots &
\cb\otimes(\Eb^{P-1}\cb) & (\Eb\cb)\otimes \cb & \cdots &
(\Eb^{P-1}\cb)\otimes
(\Eb^{P-1}\cb)\end{bmatrix}\\
&=&\bigg[\vectorize(\cb\cb^T)\ \ \vectorize((\Eb\cb)\cb^T)
\ \cdots\ \vectorize((\Eb^{P-1}\cb)\cb^T)\ \
\vectorize(\cb(\Eb\cb)^T)\ \cdots\
\vectorize((\Eb^{P-1}\cb)(\Eb^{P-1}\cb)^T)\bigg],
\end{eqnarray*}
which indicates that the $i$th column of $\Bb\otimes\Bb$,
where $i=0,1,\cdots,P^2-1$, can be written as
$\vectorize\left((\Eb^{i\bmod
P}\cb)(\Eb^{\lfloor i/P\rfloor}\cb)^T\right).$
Here, we adopt $\Eb^0\cb=\cb$ by convention.

Resume the derivation in \eqref{trBByyHH} by denoting
the matrix entry of $\Db_\theta$ by $\delta_{i,j}^{(\theta)}$:
\begin{eqnarray*}
\hat{\bb}
&=&\arg\min_{\bb\in{\mathcal{C}}}
\sum_{\theta=1}^\Theta\left[-\sum_{i=0}^{P-1}\sum_{j=0}^{P-1}\delta_{i,j}^{(\theta)}\vectorize((\Eb^i\cb)(\Eb^j\cb)^T)^T\vectorize(\yb\yb^H)
\right]\bf{1}\{\bb\in\cc_\theta\}\\
&=&\arg\min_{\bb\in{\mathcal{C}}}
\sum_{\theta=1}^\Theta\left[-\sum_{i=0}^{P-1}\sum_{j=0}^{P-1}\delta_{i,j}^{(\theta)}\tr((\Eb^j\cb)(\Eb^i\cb)^T\yb\yb^H)\right]\bf{1}\{\bb\in\cc_\theta\}\\
&=&\arg\min_{\bb\in{\mathcal{C}}}
\sum_{\theta=1}^\Theta\left[-\sum_{i=0}^{P-1}\sum_{j=0}^{P-1}\delta_{i,j}^{(\theta)}\tr\left((\Eb^i)^T\yb\yb^H\Eb^j\cb\cb^T\right)\right]\bf{1}\{\bb\in\cc_\theta\}\\
&=&\arg\min_{\bb\in{\mathcal{C}}}
\sum_{\theta=1}^\Theta\left[-\tr(\Wb_\theta\cb\cb^T)\right]\bf{1}\{\bb\in\cc_\theta\},
\end{eqnarray*}
where
$$
\Wb_\theta\triangleq\sum_{i=0}^{P-1}\sum_{j=0}^{P-1}\delta_{i,j}^{(\theta)}(\Eb^i)^T\yb\yb^H\Eb^j.
$$
We then conclude:
\begin{eqnarray}
\hat\bb&=&\arg\min_{\bb\in{\mathcal{C}}}\sum_{\theta=1}^\Theta\left[-\vectorize(\mathbb{W}_\theta)^H\vectorize(\cb\cb^T)\right]
{\bf 1}\{\bb\in\cc_\theta\}\nonumber\\
&=&\arg\min_{\bb\in{\mathcal{C}}}
\sum_{\theta=1}^\Theta
\left[-\vectorize(\mathbb{W}_\theta)^H\vectorize(\cb\cb^T)-\vectorize(\cb\cb^T)^T\vectorize(\mathbb{W}_\theta)\right]
{\bf 1}\{\bb\in\cc_\theta\}\nonumber\\
&=&\arg\min_{\bb\in{\mathcal{C}}}\frac 12
\sum_{\theta=1}^\Theta\left[\sum_{m=1}^{N}\sum_{n=1}^{N}\left(-w_{m,n}^{(\theta)}b_mb_n\right)\right]
{\bf 1}\{\bb\in\cc_\theta\},\label{originalg}
\end{eqnarray}
where $w_{m,n}^{(\theta)}$ is the real part of the entry of $\Wb_\theta$, and is given by:
$$
w_{m,n}^{(\theta)}=\sum_{i=0}^{P-1}\sum_{j=0}^{P-1}\delta_{i,j}^{(\theta)}\mbox{Re}\{y_{m+i}y_{n+j}^\ast\}.
$$
The maximum-likelihood decision remains unchanged by adding a constant, independent of the codeword $\bb$; hence,
a constant is added to make non-negative the decision criterion as:\footnote{
Here, a
{\em non-negative} maximum-likelihood criterion makes
possible the later definition of path metric $g(\bb_{(\ell)})$
{\em non-decreasing} along any path in the code tree. A
non-decreasing path metric has been shown to be a sufficient
condition for priority-first search to guarantee
to locate the codeword with the smallest path metric
\cite{Han}\cite{HCW}. It can then be anticipated
(cf.~Section~\ref{SEC-IV-B}) that letting the heuristic function
be zero for all paths in the code tree suffices to result in an
evaluation function satisfying the optimal condition
\eqref{optimalcon} in Lemma~\ref{lemma0}.

Notably, the additive constant that makes the evaluation function
non-decreasing along any path in the code tree can also be
obtained by first defining $g$ based on \eqref{originalg}, and
then determining its respective $h$ according to
\eqref{optimalcon}. Such an approach however makes complicate
the determination of heuristic function $h$ when the system
constraint that the evaluation function is recursive-computable is
additionally required. The alternative approach that directly
defines a recursive-computable $g$ based on a non-negative
maximum-likelihood criterion is accordingly adopted in this work.
}
\begin{eqnarray*}
\hat\bb&=&
\arg\min_{\bb\in{\mathcal{C}}}\left\{
\sum_{m=1}^N\max_{1\leq\eta\leq\Theta}\left(\sum_{n=1}^{m-1}|w_{m,n}^{(\eta)}|+\frac 12|w_{m,m}^{(\eta)}|\right)-
\frac 12\sum_{\theta=1}^\Theta\left[\sum_{m=1}^{N}\sum_{n=1}^{N}w_{m,n}^{(\theta)}b_mb_n\right]
{\bf 1}\{\bb\in\cc_\theta\}\right\}\\
&=&
\arg\min_{\bb\in{\mathcal{C}}}\sum_{\theta=1}^\Theta\left[
\sum_{m=1}^{N}\max_{1\leq\eta\leq\Theta}\left(\sum_{n=1}^{m-1}|w^{(\eta)}_{m,n}|+\frac 12|w_{m,m}^{(\eta)}|\right)-
\frac 12\sum_{m=1}^{N}\sum_{n=1}^{N}w_{m,n}^{(\theta)}b_mb_n
\right]{\bf 1}\{\bb\in\cc_\theta\}.
\end{eqnarray*}
It remains to prove that the metric of
$$\sum_{m=1}^{N}\max_{1\leq\eta\leq\Theta}\left(\sum_{n=1}^{m-1}|w^{(\eta)}_{m,n}|+\frac 12|w_{m,m}^{(\eta)}|\right)-
\frac 12\sum_{m=1}^{N}\sum_{n=1}^{N}w_{m,n}^{(\theta)}b_mb_n$$
can be computed recursively
for $\bb\in\cc_\theta$.

Define for path $\bb_{(\ell)}$ over code tree $\theta$ that
\begin{equation}
\label{metricg}
g(\bb_{(\ell)})\triangleq\sum_{m=1}^{\ell}\max_{1\leq\eta\leq\Theta}\left(\sum_{n=1}^{m-1}|w^{(\eta)}_{m,n}|+\frac
12|w_{m,m}^{(\eta)}|\right)-
\frac 12\sum_{m=1}^{\ell}\sum_{n=1}^{\ell}w_{m,n}^{(\theta)}b_mb_n.
\end{equation}
Then, by the symmetry that $w_{m,n}^{(\theta)}=w_{n,m}^{(\theta)}$
for $1\leq m,n\leq N$ and $1\leq\theta\leq\Theta$,
we have that for $1\leq\ell\leq N-1$,
\begin{eqnarray}
g(\bb_{(\ell+1)})&=&g(\bb_{(\ell)})+ \max_{1\leq\eta\leq\Theta}
\left(\sum_{n=1}^{\ell}|w^{(\eta)}_{\ell+1,n}|
+\frac 12|w^{(\eta)}_{\ell+1,\ell+1}|\right)-\sum_{n=1}^\ell w_{\ell+1,n}^{(\theta)}b_{\ell+1}b_n-\frac 12 w_{\ell+1,\ell+1}^{(\theta)}\nonumber\\
&=&g(\bb_{(\ell)})+\max_{1\leq\eta\leq\Theta}\alpha_{\ell+1}^{(\eta)}
-b_{\ell+1}\sum_{i=0}^{P-1}\sum_{j=0}^{P-1}\delta_{i,j}^{(\theta)}\mathrm{Re}
\left\{y_{\ell+i+1}\cdot
u_j(\bb_{(\ell+1)})\right\},\label{g}
\end{eqnarray}
where
\begin{eqnarray}
\alpha_{\ell+1}^{(\eta)}&\triangleq&\sum_{n=1}^{\ell}|w^{(\eta)}_{\ell+1,n}|
+\frac 12|w^{(\eta)}_{\ell+1,\ell+1}|\nonumber\\
&=&\sum_{n=1}^{\ell}
\left|\sum_{i=0}^{P-1}\sum_{j=0}^{P-1}\delta_{i,j}^{(\eta)}\mbox{Re}\{y_{\ell+i+1}y_{n+j}^\ast\}\right|
+\frac
12\left|\sum_{i=0}^{P-1}\sum_{j=0}^{P-1}\delta_{i,j}^{(\eta)}\mbox{Re}\{y_{\ell+i+1}y_{\ell+j+1}^\ast\}\right|
\label{alpha1}
\end{eqnarray}
and for $0\leq j\leq P-1$,
$$
u_j(\bb_{(\ell+1)})\triangleq\sum_{n=1}^{\ell}b_n
y_{n+j}^{\ast}+\frac 12b_{\ell+1}y_{\ell+j+1}^\ast=u_j(\bb_{(\ell)})
+\frac 12\left(b_\ell y_{\ell+j}^\ast+b_{\ell+1}y_{\ell+1+j}^\ast\right).
$$
This implies that we can
recursively compute $g(\bb_{(\ell+1)})$
and $\{u_j(\bb_{(\ell+1)})\}_{0\leq j\leq P-1}$ from the previous
$g(\bb_{(\ell)})$
 and
$\{u_j(\bb_{(\ell)})\}_{j=0}^{P-1}$ with the knowledge of
 $y_{\ell+1}$, $y_{\ell+2}$, $\cdots$, $y_{\ell+P}$ and
$b_{\ell+1}$, and the initial condition satisfies that
$g(\bb_{(0)})=u_j(\bb_{(0)})=b_0=0$ for
$0\leq j\leq P-1$.

A final remark in this discussion is that
although
the computation burden of $\alpha_\ell^{(\eta)}$ in \eqref{alpha1} increases linearly with $\ell$,
such a linearly growing load can be moderately compensated by the fact that
$\alpha_\ell^{(\eta)}$ is only necessary to compute it once for each $\ell$ and $\eta$, because
it can be shared for
all paths ending at level $\ell$ over code tree $\eta$.

\subsection{Heuristic function $h$ that validates \eqref{optimalcon}}
\label{SEC-IV-B}

Taking the maximum-likelihood metric $g$ into the sufficient condition in \eqref{optimalcon} yields that:
\begin{eqnarray*}
\lefteqn{\sum_{m=1}^{\ell}\max_{1\leq\eta\leq\Theta}\alpha_m^{(\eta)}-
\frac 12\sum_{m=1}^{\ell}\sum_{n=1}^{\ell}w_{m,n}^{(\theta)}b_mb_n
+h(\bb_{(\ell)})}\\
&\leq&
\min_{\left\{\tilde\bb\in\cc~:~\tilde\bb_{(\ell)}=\bb_{(\ell)}\right\}}
\left[\sum_{m=1}^{N}\max_{1\leq\eta\leq\Theta}\alpha_m^{(\eta)}-
\frac 12\sum_{m=1}^{N}\sum_{n=1}^{N}w_{m,n}^{(\theta)}b_mb_n+h(\tilde\bb)\right].
\end{eqnarray*}
Hence, in addition to $h(\tilde\bb)=0$, the heuristic function should satisfy:
\begin{equation}
h(\bb_{(\ell)})
\leq\sum_{m=\ell+1}^{N}\max_{1\leq\eta\leq\Theta}\alpha_m^{(\eta)}-\max_{\left\{\tilde\bb\in\cc:\tilde\bb_{(\ell)}=\bb_{(\ell)}\right\}}
\left(\sum_{m=\ell+1}^N\tilde b_m\sum_{n=1}^\ell w_{m,n}^{(\theta)} b_n
+\frac 12\sum_{m=\ell+1}^N\sum_{n=\ell+1}^N w_{m,n}^{(\theta)}\tilde b_m\tilde b_n\right).\label{target}
\end{equation}

Apparently, a function that guarantees to satisfy \eqref{target} is the zero-heuristic function, that is,
$h_1(\bb_{(\ell)})=0$ for any path $\bb_{(\ell)}$ in the code trees.
Adopting the zero-heuristic function $h_1$, together with the recursively computable
maximum-likelihood metric $g$ in \eqref{metricg}, makes feasible the {\em on-the-fly} priority-first search decoding.
In comparison with the exhaustive-checking decoding,
significant improvement in the computational complexity is resulted especially at medium-to-high SNRs.

In situation when the codeword length $N$ is not large such as $N\leq 50$
so that the demand of {\em on-the-fly} decoding can be moderately relaxed, we can adopt
a larger heuristic function to further reduce the computational complexity.
Upon the reception
of all $y_1,\cdots,y_L$, the heuristic function that satisfies
\eqref{target} regardless of $\tilde b_{\ell+1}$,
$\cdots$, $\tilde b_N$ can be increased up to:
\begin{eqnarray}
h_2(\bb_{(\ell)})&\triangleq&\sum_{m=\ell+1}^{N}\max_{1\leq\eta\leq\Theta}\alpha_m^{(\eta)}
-\sum_{m=\ell+1}^{N}\left|\sum_{n=1}^{\ell}w_{m,n}^{(\theta)}b_n\right|
-\frac 12\sum_{m=\ell+1}^N\sum_{n=\ell+1}^{N}\left|w_{m,n}^{(\theta)}\right|\nonumber\\
&=&\sum_{m=\ell+1}^{N}\max_{1\leq\eta\leq\Theta}\alpha_m^{(\eta)}-\sum_{m=\ell+1}^{N}\left|v_m^{(\theta)}(\bb_{(\ell)})\right|-\beta_\ell^{(\theta)},\label{heu2}
\end{eqnarray}
where for $1\leq\ell,m\leq N$ and $1\leq\theta\leq\Theta$,
$$
v_m^{(\theta)}(\bb_{(\ell)})\triangleq\sum_{n=1}^{\ell}w_{m,n}^{(\theta)}b_n
=v_{m}^{(\theta)}(\bb_{(\ell-1)})+b_{\ell}w_{\ell,m}^{(\theta)}
$$
and
$$
\beta_\ell^{(\theta)}\triangleq\sum_{m=\ell+1}^{N}
\left(\sum_{n=\ell+1}^{m-1}|w_{m,n}^{(\theta)}|+\frac 12|w_{m,m}^{(\theta)}|\right)
=\beta_{\ell-1}^{(\theta)}-\sum_{n=\ell+1}^N|w_{\ell,n}^{(\theta)}|-\frac 12|w_{\ell,\ell}^{(\theta)}|
$$
with initially $v_m^{(\theta)}(\bb_{(0)})=b_0=0$, and
$\beta_0^{(\theta)}=\sum_{m=1}^{N}\alpha_m^{(\theta)}$.
Simulations show that when being compared with the zero-heuristic
function $h_1$, the heuristic function in \eqref{heu2} further reduces
the number of path expansions during the decoding process up to
one order of magnitude (cf.~Tab.~\ref{table3a}, in which
$f_1=g+h_1=g$ and $f_2=g+h_2$.).

A final note on the priority-first search of the maximum-likelihood codeword is
that in those cases that equality in \eqref{coderule} cannot be fulfilled,
codewords will be selected equally from multiple code trees, e.g.,
one code tree structured according to $\Bb^T\Bb=\Gb_1$, and the other code tree
targeting $\Bb^T\Bb=\Gb_2$ for $N$ even and $P=2$.
Since the transmitted codeword belongs to only one of the code trees,
to maintain {\em individual Stack} for the codeword search over {\em each} code tree
will introduce considerable unnecessary decoding burdens
especially for the code trees that the transmitted codeword does not belong to.
Hence, only one Stack is maintained during the priority-first search,
and the evaluation function values for different code trees are
compared and sorted in the same Stack.
The path to be expanded next is therefore the one whose evaluation function value
is globally the smallest.

\section{Simulation Results}
\label{Simulationresults}

In this section, the performance of the rule-based constructed
codes proposed in Section \ref{Codeconstruction} is examined. Also
illustrated is the decoding complexity of the maximum-likelihood
priority-first search decoding algorithm presented in the previous section.
For ease of comparison, the channel parameters used in our
simulations follow those in \cite{Skoglund}, where $\hb$ is
complex zero-mean Gaussian distributed with
$E[\hb\hb^H]=(1/P)\Ib_P$ and $P=2$. The average system SNR is thus
given by:
\begin{equation}
\mbox{Average SNR}=\frac
N{L}\frac 1{\sigma_n^2}\tr\left(E[\hb\hb^H]\frac{1}{N}\Bb^T\Bb\right) =\frac
N{L}\frac 1{\sigma_n^2}\tr\left(\frac{1}{NP}\Bb^T\Bb\right)=\frac
N{(N+P-1)}\frac 1{\sigma_n^2}, \label{AveSNR}
\end{equation}
since $\tr\left(\Bb^T\Bb\right)=NP$ for all codewords simulated.\footnote{
The authors in \cite{Skoglund} directly define the channel SNR as $1/\sigma_n^2$.
It is apparent that their definition is exactly the limit of \eqref{AveSNR} as $N$ approaches infinity.

Since it is assumed that adequate guard period between two encoding blocks exists
(so that there is no interference between two consecutive decoding blocks),
the computation of the system SNR for finite $N$ should be adjusted to account for this
muting (but still part-of-the-decoding-block) guard period.
For example, in comparison of the (6,3) and (20,10) codes over channels with memory order $1$ (i.e., $P=2$),
one can easily observe that the former can only transmit 18 code bits in the time interval of $21$ code bits,
while the latter pushes out up to 20 code bits in the period of the same duration.
Thus, under fixed code bit transmission power and fixed component noise power $\sigma_n^2$,
it is reasonable for the (20,10) code to result a higher SNR than the (6,3) code.
}

\begin{figure}[tbp]
\centering \includegraphics[width=5in]{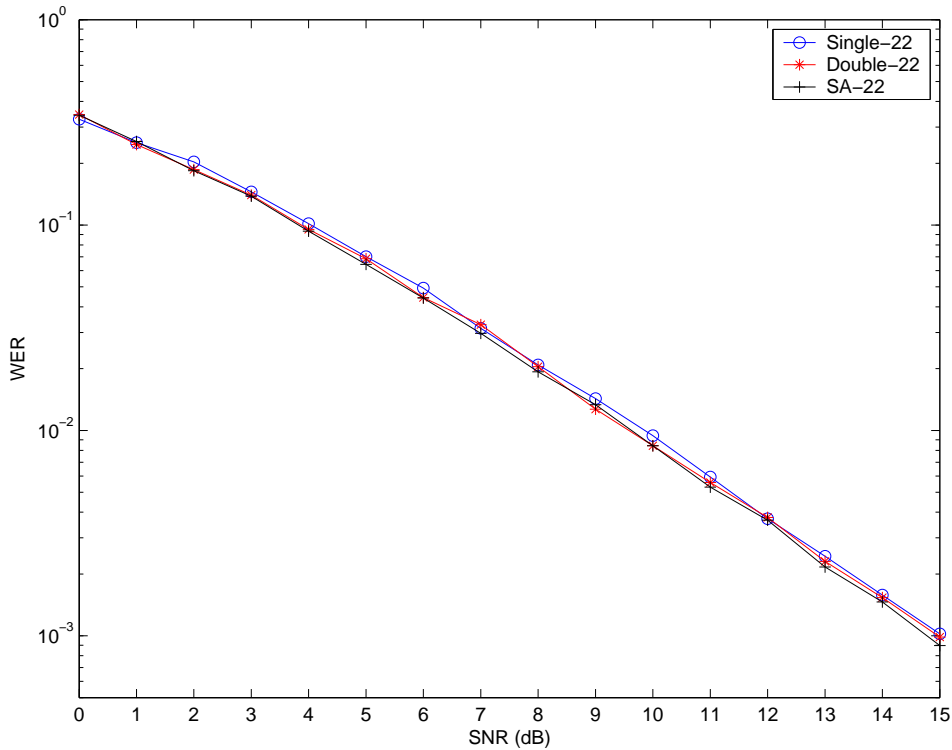}
\caption{The maximum-likelihood word error rates (WERs) of the half-rate computer-searched code
by simulated annealing in \cite{Skoglund} (SA-22),
the rule-based half-rate code with double code trees (Double-22),
and the rule-based half-rate code with single code tree (Single-22).
The codeword length is $N=22$.} \label{fig:WER}
\end{figure}

\begin{figure}[tbp]
\centering \includegraphics[width=5in]{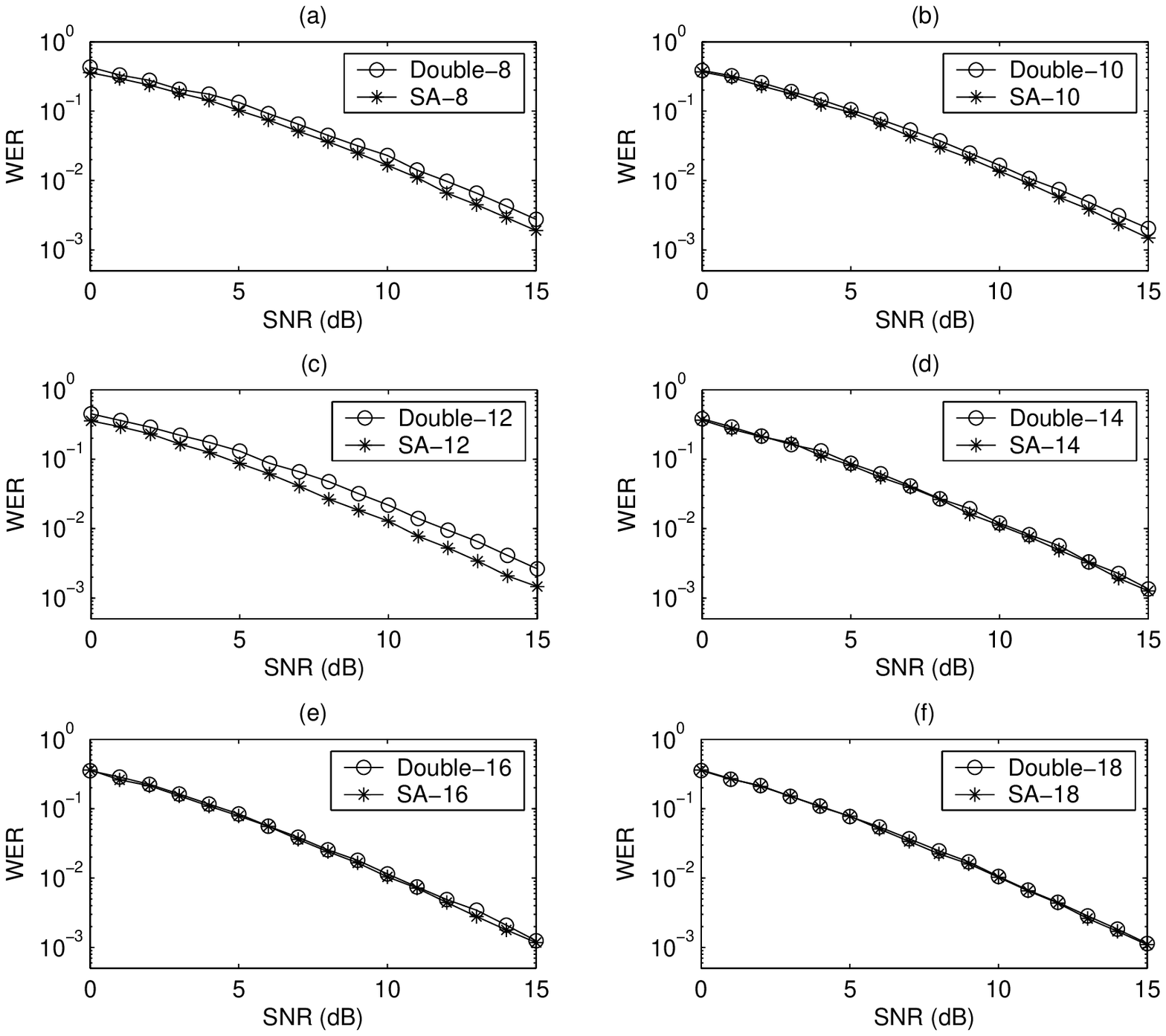} \caption{The
maximum-likelihood word error rates (WERs) of the
computer-searched half-rate code by simulated annealing (SA-$N$) and the
rule-based half-rate code with double code trees (Double-$N$).}
\label{fig:WER1}
\end{figure}

There are three codes simulated in Fig.~\ref{fig:WER}: the computer-searched half-rate code obtained
in~\cite{Skoglund} (SA-22), the rule-based double-tree code in which half of the
codewords satisfying $\Bb^T\Bb=\Gb_1$ and the remaining half
satisfying $\Bb^T\Bb=\Gb_2$ (Double-22), and  the rule-based single-tree code whose codewords
are all selected from the candidate sequences satisfying
$\Bb^T\Bb=\Gb_1$ (Single-22).
We observe from Fig.~\ref{fig:WER} that the Double-22 code performs almost the same as the SA-22 code obtained in \cite{Skoglund} at $N=22$. Actually, extensive
simulations in Fig.~\ref{fig:WER1} show that the performance of
the rule-based double-tree half-rate codes is as good as the computer-searched
half-rate codes for all $N>12$. However, when $N\leq 12$, the
approximation in \eqref{ratio1} can no longer be well maintained due to
the restriction that $|\ac(\bb_{(\ell)}|\Gb)|$ must be an integer,
and an apparent performance deviation between the rule-based double-tree
half-rate codes and the computer-searched half-rate codes can therefore be
sensed for $N$ below 12.

In addition to the Double-22 code, the performance of the Single-22 code is also simulated in Fig.~\ref{fig:WER}.
Since the pairwise codeword distance
in the sense of \eqref{pwdis} for the Single-22 code is in general
smaller than that of the Double-22 code, its performance has $0.2$ dB degradation to that of the Double-22 code. However, we will see in later simulation that the
Single-22 code has the smallest decoding complexity among
the three codes in Fig.~\ref{fig:WER}.
This suggests that to select codewords uniformly from a single
code tree should not be ruled out as a candidate design, especially
when the decoding complexity becomes
the main system concern.

\begin{figure}[tbp]
\centering \includegraphics[width=5in]{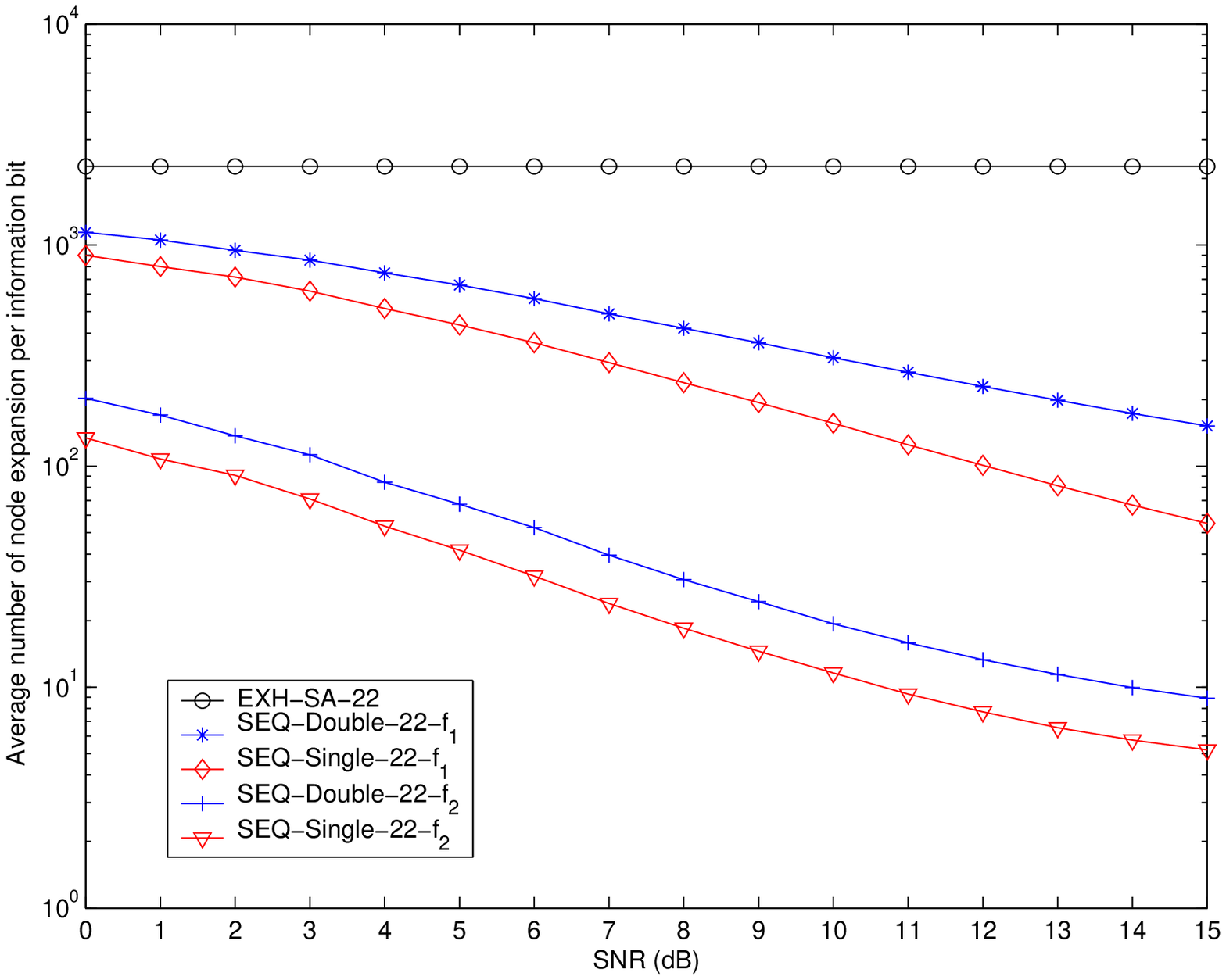} \caption{The
average numbers of node expansions per information bit
for the computer-searched code in \cite{Skoglund} by exhaustive
decoding (EXH-SA-22), and the rule-based single-tree (SEQ-Single-$22$) and double-tree (SEQ-Double-$22$) codes using the
priority-first search decoding guided by either evaluation function $f_1$ or evaluation function $f_2$.} \label{fig:complexity}
\end{figure}

In Fig.~\ref{fig:complexity}, the average numbers of node expansions
per information bit are illustrated for the codes examined in Fig.~\ref{fig:WER}.
Since the number of nodes expanded is exactly the number of
tree branch metrics (i.e., one recursion of $f$-function values) computed,
the equivalent complexity of exhaustive decoder is correspondingly plotted.
 It can then be observed that in comparison with the exhaustive decoder,
a significant reduction in computational burdens can be
obtained at moderate-to-high SNRs by adopting
the Double-22 code and
the priority-first search decoder with on-the-fly evaluation function $f_1$, namely, $g$ (cf.~Eq.~\eqref{g}).
Further reduction can be approached if the Double-22 code is replaced with the Single-22 code.
The is because performing the sequential search over multiple code trees
introduce extra node expansions for those code trees
that the transmitted codeword does not belong to.
An additional order-of-magnitude reduction in node expansions can be achieved
when the evaluation function $f_2=g+h_2$ is used instead.

\begin{figure}[tbp]
\centering \includegraphics[width=5in]{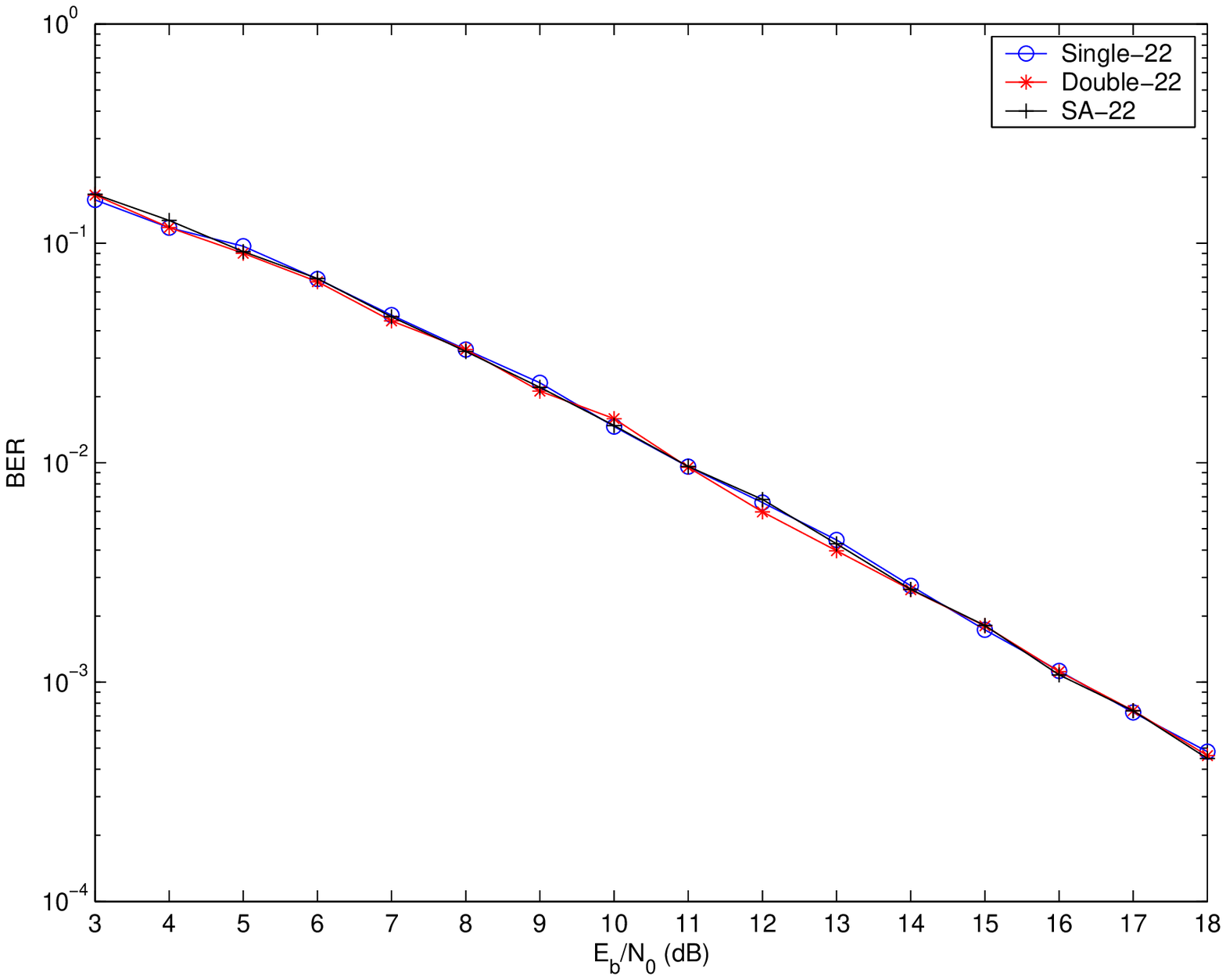} \caption{Bit
error rates (BERs) for the codes simulated in Fig.~\ref{fig:WER}.}
\label{fig:fig4}
\end{figure}

The authors in \cite{Coskun} and \cite{Skoglund} only focused on
the word-error-rates (WERs). No bit error rate (BER)
performances that involve the mapping design between
the information bit patterns and the codewords
were presented. Yet, in certain applications, such as voice transmission
and digital radio broadcasting, the BER is generally considered a more critical performance index.
In addition, the adoption of the BER performance index, as well as the signal-to-noise ratio per information bit,
facilitates the comparison among codes of differen code rates.

Figure \ref{fig:fig4} depicts the BER performances of the codes simulated
in Fig.~\ref{fig:WER}. The corresponding $E_b/N_0$ is computed according to:
$$
E_b/N_0=\frac 1{R}\cdot\mbox{SNR},
$$
where $R=K/N$ is the code rate.
The mapping between the bit patterns and the codewords of the given computer-searched code is obtained
through simulated annealing by minimizing the upper bound of:
$$
\mbox{BER}\leq\frac 1{2^{K}}\sum_{i=1}^{2^K}\sum_{j=1,j\neq i}^{2^K}\frac{D(\mb(i),\mb(j))}{K}
\Pr\left(\left.\hat{\bb}=\bb(j)\right|\bb(i)\mbox{ transmitted}\right),
$$
where, other than the notations defined in \eqref{pwp},
 $\mb(i)$ is the information sequence corresponding to $i$-th codeword, and
$D(\cdot,\cdot)$ is the Hamming distance.
For the rule-based constructed codes in Section \ref{Encode-Subsection},
the binary representation of the index of the requested codeword in Step \ref{Step1}
is directly taken as the information bit pattern corresponding to the requested codeword.
The result in Fig.~\ref{fig:fig4} then indicates that
the BER performances of the three curves are almost the same, which directs
the conclusion
that taking the binary representation of the requested codeword index as the
information bit pattern for the rule-based constructed code not only makes
easy its implementation but also has similar BER performance to the computer-optimized codes.

\begin{figure}[tbp]
\centering \includegraphics[width=5in]{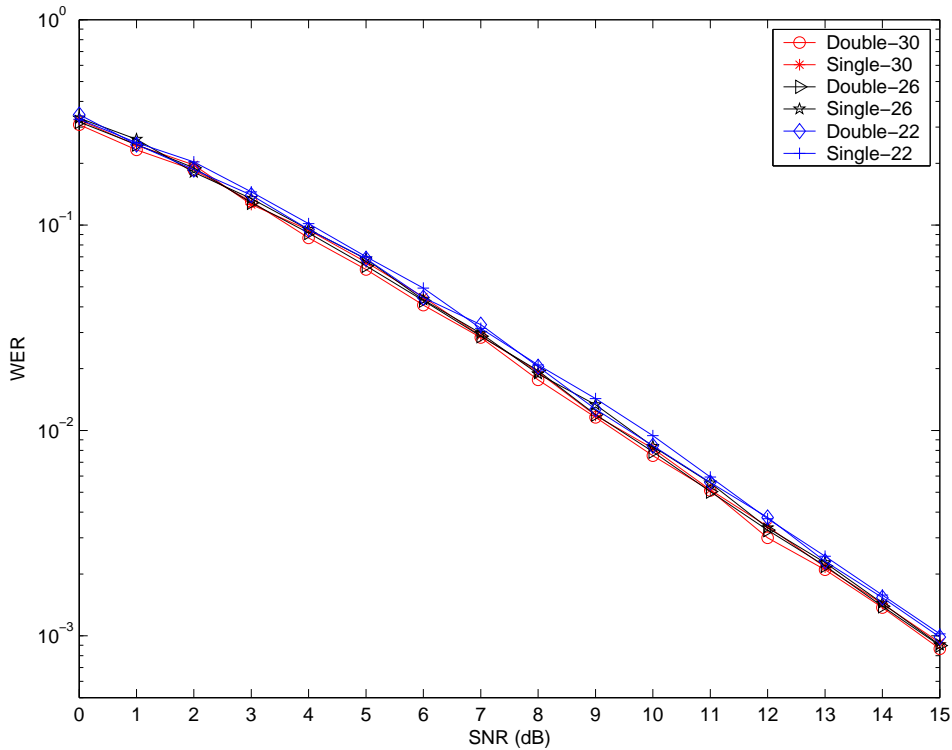} \caption{Word
error rates (WERs) for the codes of Single-$22$, Double-$22$,
Single-$26$, Double-$26$, Single-$30$ and Double-$30$.}
\label{fig:sim5}
\end{figure}

\begin{figure}[tbp]
\centering \includegraphics[width=5in]{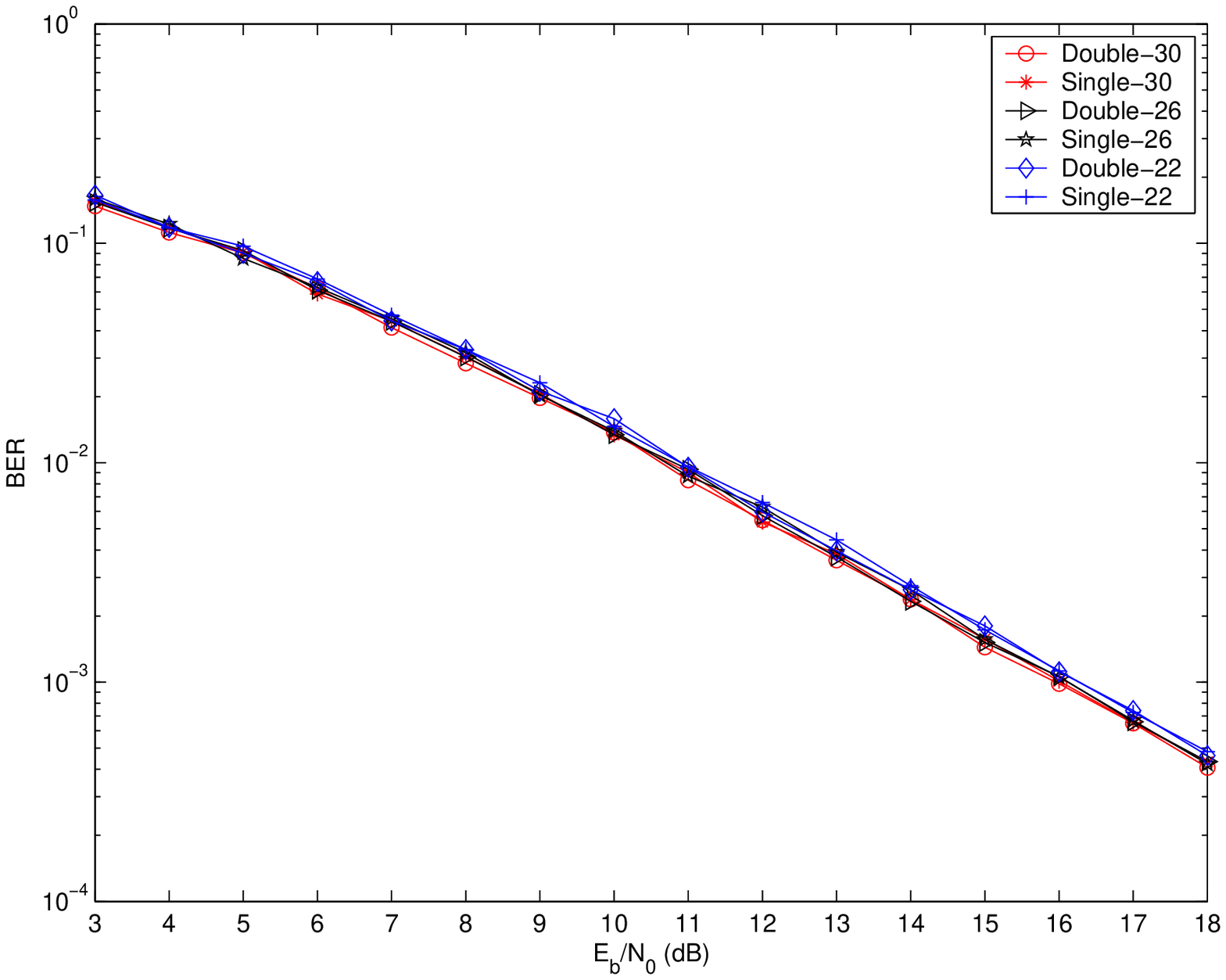} \caption{Bit
error rates (BERs) for the codes of Single-$22$, Double-$22$,
Single-$26$, Double-$26$, Single-$30$ and Double-$30$.}
\label{fig:sim6}
\end{figure}

In the end,
we demonstrate the
WER and BER performances of Single-$26$, Double-$26$, Single-$30$,
Double-$30$ codes, together with those of Single-$22$ and Double-$22$ codes,
over the quasi-static fading channels respectively in Figs.~\ref{fig:sim5} and \ref{fig:sim6}.
Both figures show that the Double-$30$ code
has the best maximum-likelihood performance not only in WER but also in BER.
This result echoes the usual anticipation that
the performance favors a longer code
as long as the channel coefficients
remain unchanged in a coding block.
Their decoding complexities are listed in Tab.~\ref{table3a},
from which we observe that the saving of decoding complexity
of metric $f_2$ with respect to metric $f_1$ increases
as the codeword length further grows.

\begin{table}[hbt]
\caption{Average numbers of
node expansions per information bit for the priority-first search decoding of
the constructed half-rate codes of length $22$, $26$ and $30$.}\label{table3a}
\begin{center}
\begin{tabular}{||c|r|r|r|r|r|r|r|r|r|r|r||}\hline\hline
$SNR$&5dB&6dB&7dB&8dB&9dB&10dB&11dB&12dB&13dB&14dB&15dB\\\hline
Double-22-$f_1$&671&590&506&436&375&320&274&236&204&178&156\\[-1.5mm]
Double-22-$f_2$&68&55&42&32&26&20&17&14&12&10&9\\\hline ratio of
$\left.f_1\right/f_2$
&9.8&10.7&12.0&13.6&14.4&16.0&16.1&16.8&17.0&17.8&17.3\\\hline\hline
Double-$26$-$f_1$&$2361$&$2006$&$1695$&$1416$&$1189$&$981$&$813$&$677$&523&499&392\\[-1.5mm]
Double-$26$-$f_2$&$175$&$130$&$94$&$69$&$53$&$39$&$29$&$23$&18&15&13\\\hline
ratio of $\left.f_1\right/f_2$
&$13.5$&$15.4$&$18.0$&$20.5$&$22.4$&$25.2$&$28.0$&$29.4$&29.1&33.3&30.2\\\hline\hline
Double-$30$-$f_1$&$8455$&$7073$&$5760$&$5133$&$3759$&$3430$&$2644$&$1996$&$1765$&$1368$&$1081$\\[-1.5mm]
Double-$30$-$f_2$&$459$&$332$&$232$&$166$&$119$&$86$&$60$&$44$&$33$&$25$&$20$\\\hline
ratio of $\left.f_1\right/f_2$
&$18.4$&$21.3$&$24.8$&$30.9$&$31.6$&$39.9$&$44.1$&$45.4$&$53.4$&$54.7$&$54.1$\\\hline\hline
Single-$22$-$f_1$&460&371&308&250&200&163&130&105&85&69&57\\[-1.5mm]
Single-$22$-$f_2$&45&33&26&20&15&12&10&8&7&6&5\\\hline ratio
of$\left.f_1\right/f_2$&10.2&11.2&11.8&12.5&13.3&13.5&13.0&13.1&12.1&11.5&11.4\\\hline\hline
Single-$26$-$f_1$&$1635$&$1328$&$1061$&$839$&$666$&$522$&$403$&$312$&$244$&$191$&$152$\\[-1.5mm]
Single-$26$-$f_2$&$112$&$79$&$57$&$42$&$31$&$23$&$17$&$13$&$11$&$9$&$7$\\\hline
ratio of $\left.f_1\right/f_2$
&$14.6$&$16.8$&$18.6$&$20.0$&$21.5$&$22.7$&$23.7$&$23.9$&$22.2$&$21.2$&$21.7$\\\hline\hline
Single-$30$-$f_1$&$5871$&$4695$&$3857$&$2924$&$2335$&$1813$&$1328$&$884$&$805$&$572$&$416$\\[-1.5mm]
Single-$30$-$f_2$&$284$&$199$&$144$&$101$&$72$&$51$&$35$&$26$&18&14&11\\\hline
ratio of $\left.f_1\right/f_2$
&$20.6$&$23.6$&$26.8$&$29.0$&$32.4$&$35.5$&$38.0$&$34.0$&$44.7$&$40.9$&$37.8$\\\hline\hline
\end{tabular}
\end{center}
\end{table}

\section{Codes for channels with fast fading}
\label{fastfading}

In previous sections, also in \cite{Chugg}, \cite{Giese} and
\cite{Skoglund}, it is assumed that the channel coefficients $\hb$
are invariant in each coding block of length $L=N+P-1$. In this
section, we will show that the approaches employed in previous
sections can also be applicable to the situation that $\hb$ may
change in every $Q$ symbol, where $Q<L$.

For $1\leq k\leq M=\lceil L/Q\rceil$,
let $\hb_k\triangleq[h_{1,k}\ \ h_{2,k}\ \cdots\ h_{P,k}]^T$ be the constant channel coefficients
at the $k$th sub-block. Denote by
$\bb_k=[b_{(k-1)Q-P+2}\ \cdots\ b_{(k-1)Q+1}\ \cdots\ b_{kQ}]^T$
the portion of $\bb$, which will affect the output portion
$\yb_k=[y_{(k-1)Q+1}\ \ y_{(k-1)Q+2}\ \cdots\ y_{kQ}]$,
where we assume $b_j=0$ for $j\leq 0$ and $j>N$ for notational convenience.
Then, for a channel whose coefficients change in every $Q$ symbol,
the system model defined in \eqref{systemmodel}
remains as $\yb=\Bb\hb+\nb$ except that both $\yb$ and $\nb$ extend as $MQ\times 1$ vectors
with $y_j=n_j=0$ for $j>L$, and $\Bb$ and $\hb$ have to be re-defined as
$$\Bb\triangleq\Bb_1\oplus\Bb_2\oplus\cdots\oplus\Bb_M\quad{\rm and}\quad
\hb\triangleq\begin{bmatrix}\hb_1^H&\hb_2^H&\cdots&\hb_M^H\end{bmatrix}^H,$$
where
$\Bb_k=[\zerob_{Q\times(P-1)}\ \ \Ib_Q][\bb_k\ \ \tilde\Eb\bb_k\ \cdots\ \tilde\Eb^{P-1}\bb_k]$
is a $Q\times P$ matrix,
$$\tilde\Eb\triangleq\begin{bmatrix}
0 & 0 \cdots & 0 & 0\\
1 & 0 \ddots & 0 & 0\\
0 & 1 \ddots & 0 & 0\\
0 & 0 \cdots & 1 & 0
\end{bmatrix}_{(Q+P-1)\times(Q+P-1)}
$$
and ``$\oplus$'' is the direct sum operator of two matrices.\footnote{For two matrices $\Ab$
and $\Bb$, the direct sum of $\Ab$ and
$\Bb$ is defined as $\Ab\oplus\Bb=\begin{bmatrix}\Ab & \zerob\\
\zerob & \Bb\end{bmatrix}$.}

Based on the new system model, we have
$\Pb_B=\Pb_{B_1}\oplus \Pb_{B_2}\oplus\cdots\oplus \Pb_{B_M}$, where $\Pb_{B_k}=\Bb_k(\Bb_k^T \Bb_k)^{-1}\Bb_k^T$,
and
Eq.~\eqref{yy_Pb} becomes:
\begin{equation}
\hat{\bb}
=\arg\max_{\bb \in
\mathcal{C}}\sum_{k=1}^{M}\left\|\vectorize(\yb_k\yb_k^H)-\vectorize(\Pb_{B_k})\right\|^2\label{fastmetric}
\end{equation}
Again, codeword $\bb$ and transformed codeword $\Pb_B$ is not one-to-one corresponding
unless the first element of $\bb$, namely $b_1$, is fixed.\footnote{It can be derived that
given $Q\geq P$ and $\Bb_k^T\Bb_k=\Gb_k$ for $1\leq k\leq M$,
$$
\left\{
\begin{array}{rcl}
b_{Q-P+2}&=&b_1\times (-1)^{(Q-P+1-\gamma_{P,P-1,1})/2}\\
b_{kQ-P+2}&=&b_{(k-1)Q-P+2}\times(-1)^{(Q-\gamma_{P,P-1,k})/2}\mbox{ for }k=2,\cdots,M-1
\end{array}\right.
$$
where $\gamma_{i,j,k}$ is the $(i,j)$th entry of the symmetric matrix $\Gb_k$ for $1\leq i,j\leq P$,
and, in our setting,
$\gamma_{P,P-1,k}\in\{0,\pm 1\}$ should be chosen to make the exponent of $(-1)$ an integer.
Therefore, the first bit in each $\bb_k$ is fixed once $b_1$ is set,
which indicates that with the knowledge of $b_1$, codeword $\bb$ can be uniquely determined by
transformed codeword $\Pb_B$.
}

Since $\Bb^T\Bb=(\Bb_1^T\Bb_1)\oplus(\Bb_2^T\Bb_2)\oplus\cdots\oplus(\Bb_M^T\Bb_M)$,
the maximization of system SNR can be achieved simply by assigning
\begin{equation}
\label{MQ}
\Bb_1^T\Bb_1=\Bb_2^T\Bb_2=\cdots=\Bb_M^T\Bb_M=Q\cdot\Ib_Q
\end{equation}
if such assignment is possible. Due to the same reason mentioned in Section \ref{IIIA}, approximation to \eqref{MQ}
will have to be taken in the true code design.

It remains to determine the number of
all possible $\pm 1$-sequences of length $N$, whose first $\ell$ bits equal $b_1$,
$b_2$, $\ldots$, $b_\ell$ subject to $\Bb_k^T\Bb_k=\Gb_k$ for $1\leq k\leq M$.

\bigskip

\begin{lemma}\label{Lem5}
Fix $P=2$ and $Q\geq P$, and put
\begin{eqnarray}
\Bb_1^T\Bb_1=\begin{bmatrix}Q&c_1\\c_1&Q-1\end{bmatrix},\quad
\Bb_k^T\Bb_k=\begin{bmatrix}Q&c_k\\c_k&Q\end{bmatrix}\mbox{ for }2\leq k\leq M-1,\quad\nonumber\\
\mbox{and }\Bb_M^T\Bb_M=
\begin{bmatrix}N-(M-1)Q&c_M\\c_M&N-(M-1)Q+1\end{bmatrix},\label{aa}
\end{eqnarray}
where in our code selection process, $[c_1,c_2,\cdots,c_M]\in\{0,\pm 1\}^M$
will be chosen such that $Q-1+c_1$, $Q+c_k$
for $2\leq k\leq M-1$, and $N-(M-1)Q+c_M$ are all even.
Then, the number of
all possible $\pm 1$-sequences of length $N$, whose first $\ell$ bits equal $b_1$,
$b_2$, $\ldots$, $b_\ell$ subject to \eqref{aa} is given by:
\begin{eqnarray*}
\left\{\begin{array}{r}
\displaystyle\binom{Q-(\ell\bmod Q)}{\frac{Q-(\ell\bmod Q)+c_\tau-m_\ell}2}
\left[\displaystyle\prod_{k=\tau+1}^{M-1}\binom{Q}{\frac {Q+c_{k+1}}2}\right]\displaystyle\binom{N-(M-1)Q}{\frac{N-(M-1)Q+c_M}2}\oneb\left\{|c_\tau-m_\ell|\leq Q-(\ell\bmod Q)\right\},\\
\mbox{ for }1\leq\tau<M;\\
\displaystyle\binom{N-(M-1)Q}{\frac{N-(M-1)Q+c_M-m_\ell}2}\oneb\left\{|c_M-m_\ell|\leq N-(M-1)Q\right\},\hspace*{45mm}
\mbox{ for }\tau=M\\
\end{array}\right.
\end{eqnarray*}
where $\tau=\lfloor\ell/Q\rfloor+1$, and
$$m_\ell=\left\{\begin{array}{ll}
0,&\ell=1\mbox{ or }(\ell=(\tau-1)Q\mbox{ and }2\leq\tau\leq M);\\
b_1b_2+\cdots+b_{\ell-1}b_{\ell},&1<\ell<Q;\\
b_{(\tau-1)Q}b_{(\tau-1)Q+1}+\cdots+b_{\ell-1}b_\ell,&(\tau-1)Q<\ell< \tau Q\mbox{ and }2\leq\tau\leq M.
\end{array}\right.$$
\end{lemma}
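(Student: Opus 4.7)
The plan is to extend the counting argument of Lemma~\ref{Lem1} to the multi-sub-block setting, exploiting the fact that the off-diagonal constraints on $\Bb_k^T\Bb_k$ in \eqref{aa} decouple across $k$ into disjoint sums of consecutive-bit products.

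First I would expand $\Bb_k^T\Bb_k$ using the definition $\Bb_k=[\zerob_{Q\times(P-1)}\ \Ib_Q][\bb_k\ \tilde\Eb\bb_k]$ for $P=2$. A routine computation shows that the $(1,2)$-entry of $\Bb_k^T\Bb_k$ equals $b_1b_2+\cdots+b_{Q-1}b_Q$ for $k=1$ (only $Q-1$ terms, since $b_0=0$), equals $b_{(k-1)Q}b_{(k-1)Q+1}+\cdots+b_{kQ-1}b_{kQ}$ for $2\le k\le M-1$ (exactly $Q$ terms), and equals $b_{(M-1)Q}b_{(M-1)Q+1}+\cdots+b_{N-1}b_N$ for $k=M$ ($N-(M-1)Q$ terms, using $b_j=0$ for $j>N$); the diagonal entries are automatic, since each $b_j^2$ is $1$ or $0$ according to whether $1\le j\le N$. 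Thus \eqref{aa} is equivalent to the statement that each of these disjoint sums of consecutive-bit products equals the corresponding $c_k$.

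Given the fixed prefix $b_1,\ldots,b_\ell$, let $\tau=\lfloor\ell/Q\rfloor+1$ and distinguish the partially-constrained sub-block $\tau$ from the later, entirely free sub-blocks. For each fully-free sub-block $k$ with $\tau<k\le M-1$, I would apply the one-to-one correspondence used in Lemma~\ref{Lem1}: given the boundary bit $b_{(k-1)Q}$ already inherited from the preceding choices, the map $[b_{(k-1)Q+1},\ldots,b_{kQ}]\leftrightarrow[b_{(k-1)Q}b_{(k-1)Q+1},\ldots,b_{kQ-1}b_{kQ}]$ is a bijection onto $\{\pm 1\}^Q$. Hence the number of completions of sub-block $k$ equals the number of $\pm 1$-sequences of length $Q$ summing to $c_k$, namely $\binom{Q}{(Q+c_k)/2}$, and crucially this count does not depend on $b_{(k-1)Q}$. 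An identical argument yields $\binom{N-(M-1)Q}{(N-(M-1)Q+c_M)/2}$ for sub-block $M$.

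For the partially-determined sub-block $\tau$, the first $\ell\bmod Q$ of its product terms (or $\ell-1$ when $\tau=1$) are already fixed with sum equal to $m_\ell$, and the same bijection applied to the remaining $Q-(\ell\bmod Q)$ bits yields $\binom{Q-(\ell\bmod Q)}{(Q-(\ell\bmod Q)+c_\tau-m_\ell)/2}$ completions, subject to the feasibility indicator $|c_\tau-m_\ell|\le Q-(\ell\bmod Q)$. Multiplying the counts over sub-blocks $\tau,\tau+1,\ldots,M$ produces the claimed product formula, and the case $\tau=M$ collapses to the single binomial coefficient with indicator shown in the statement. The main bookkeeping obstacle will be keeping the boundary index accounting consistent: the first sub-block contributes only $Q-1$ products rather than $Q$ (because $b_0=0$), the last sub-block has a variable number $N-(M-1)Q$ of products, and the parity conditions on $Q-1+c_1$, $Q+c_k$, and $N-(M-1)Q+c_M$ being even are used precisely to guarantee that each binomial-coefficient argument is a nonnegative integer whenever the indicator is on; everything else reduces to the elementary counting already carried out in Lemma~\ref{Lem1}.
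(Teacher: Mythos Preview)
Your proposal is correct and follows essentially the same route as the paper: identify that \eqref{aa} reduces to the disjoint sum constraints $c_k=\sum b_{i-1}b_i$ over each sub-block, then apply the bijection and counting of Lemma~\ref{Lem1} to the partially-fixed sub-block $\tau$ and to each free sub-block, and multiply. The paper's proof is terser (it simply invokes ``the same argument as in Lemma~\ref{Lem1}'' and writes down the product), but your added explanation of why each free sub-block's count is independent of the inherited boundary bit $b_{(k-1)Q}$ is exactly the point that makes the factorisation legitimate.
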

\medskip
\begin{proof} It requires
\begin{eqnarray*}
\left\{\begin{array}{lcl}
c_1&=&b_1b_2+\cdots+b_{Q-1}b_{Q}\\
&\vdots&\\
c_\tau &=&b_{(\tau -1)Q}b_{(\tau -1)Q+1}+\cdots+b_\ell b_{\ell+1}+\cdots+b_{\tau Q-1}b_{\tau Q}\\
&=&m_{\ell}+b_\ell b_{\ell+1}+\cdots+b_{\tau Q-1}b_{\tau Q}\\
&\vdots&\\
c_M&=&b_{(M-1)Q}b_{(M-1)Q+1}+\cdots+b_{N-1}b_N
\end{array}\right.
\end{eqnarray*}
Following the same argument as in Lemma~\ref{Lem1}, we obtain that the number of
all possible $\pm 1$-sequences of length $N$, whose first $\ell$ bits equal $b_1$,
$b_2$, $\ldots$, $b_\ell$ subject to \eqref{aa} is given by:
\begin{eqnarray*}
&&\binom{kQ-\ell}{(kQ-\ell+c_k-m_\ell)/2}\oneb\left\{|c_k-m_\ell|\leq kQ-\ell\right\}\\
&&\times\binom{Q}{(Q+c_{k+1})/2}\oneb\left\{|c_{k+1}|\leq Q\right\}\times
\cdots\times\binom{Q}{(Q+c_{M-1})/2}\oneb\left\{|c_{M-1}|\leq Q\right\}\\
&&\times\binom{N-(M-1)Q}{(N-(M-1)Q+c_M)/2}\oneb\left\{|c_{M}|\leq N-(M-1)Q\right\}.
\end{eqnarray*}
The proof is completed by noting that
$\ell=(\tau-1)Q+(\ell\bmod Q)$, $|c_k|\leq Q$ and $|c_M|\leq N-(M-1)Q$ are always valid.
\end{proof}

With the availability of the above lemma, the code construction algorithm in Section \ref{encoding}
can be performed.
Next, we re-derive the maximum-likelihood decoding metric for use of priority-first search
decoding algorithm. Continuing the derivation from \eqref{fastmetric} based on
$\Bb_k^T\Bb_k=\Gb_{\theta,k}$ for $1\leq k\leq M$ and $1\leq\theta\leq \Theta$, we can establish in terms of similar procedure as in
Section \ref{Rmmg} that:
\begin{eqnarray*}
\hat{\bb}
&=&\arg\min_{\bb\in \mathcal{C}}
\frac{1}{2}\sum_{k=1}^{M}\sum_{m=1}^{Q+P-1}\sum_{n=1}^{Q+P-1}
\left[-w_{m,n,k}^{(\theta)}b_{(k-1)Q-P+m+1}b_{(k-1)Q-P+n+1}\right]\oneb\{\bb\in\cc_\theta\}
\end{eqnarray*}
where for $1\leq m,n\leq Q+P-1$,
$$w_{m,n,k}^{(\theta)}=\sum_{i=0}^{P-1}\sum_{j=0}^{P-1}\delta_{i,j,k}^{(\theta)}
\mathrm{Re}\{\tilde y_{m+i,k} \tilde y_{n+j,k}^{\ast}\},$$
$\delta_{i,j,k}^{(\theta)}$ is the $(i,j)$th entry\footnote{
Under the assumption that $Q\geq P$,
the $i$th diagonal element of the target $\Gb_{\theta,1}$ is given by $Q-i+1$,
and the diagonal elements of the target $\Gb_{\theta,k}$ are
equal to $Q$ for $2\leq k<M$; hence, their inverse matrices exist.
However, when $P>N-(M-1)Q$, $\Gb_{\theta,M}$ has no inverse.
In such case, we re-define $\Db_{\theta,M}$ as:
$$\Db_{\theta,M}\triangleq\zerob_{[N-(M-1)Q]\times[N-(M-1)Q]}\oplus
\Gb_{\theta,M}^{-1}(N-(M-1)Q+1),$$
where $\Gb_{\theta,M}(j)$ is a $(P-j+1)\times(P-j+1)$ matrix
that contains the $j$th to $P$th rows and the $j$th to $P$th columns of $\Gb_{\theta,M}$.
} of $\Db_{\theta,k}=\Gb_{\theta,k}^{-1}$,
and
$\tilde\yb_k=[\zerob_{1\times (P-1)}\ \ \yb_k^H\ \ \zerob_{1\times(P-1)}]^H=[\tilde y_{1,k}\ \cdots
\ \tilde y_{Q+2P-2,k}]^T$.
As it turns out, the recursive on-the-fly metric for the priority-first search decoding algorithm is:
\begin{eqnarray*}
g(\bb_{(\ell)})-g(\bb_{(\ell-1)})
=\left\{\begin{array}{r}
\displaystyle\max_{1\leq\eta\leq\Theta}\alpha_{s,k}^{(\eta)}-b_{\ell}
\sum_{i=0}^{P-1}\sum_{j=0}^{P-1}
\delta_{i,j,k}^{(\theta)}\mathrm{Re}\{\tilde y_{s+i,k}\cdot
u_{j,k}(\bb_{(\ell)})\},\mbox{ for }P\leq s\leq Q\\
\displaystyle\max_{1\leq\eta\leq\Theta}\alpha_{r,k}^{(\eta)}
+\max_{1\leq\eta\leq\Theta}\alpha_{s,k+1}^{(\eta)}
-b_{\ell}\sum_{i=0}^{P-1}\sum_{j=0}^{P-1}
\bigg(\delta_{i,j,k}^{(\theta)}\mathrm{Re}\{\tilde y_{r+i,k}\cdot
u_{j,k}(\bb_{(\ell)})\}  \\
~+\delta_{i,j,k+1}^{(\theta)}\mathrm{Re}\{\tilde y_{s+i,k+1}\cdot
u_{j,k+1}(\bb_{(\ell)})\}\bigg),  \mbox{ otherwise.}
\end{array}
\right.
\end{eqnarray*}
where $-P+2\leq\ell\leq N$, $s=[(\ell+P-2)\bmod Q]+1$, $r=s+Q$,
$k=\max\{\lceil\ell/Q\rceil,1\}$,
$$
\alpha_{s,k}^{(\eta)}\triangleq
\sum_{n=1}^{s-1}\left|w_{s,n,k}^{(\eta)}\right|+\frac
12\left|w_{s,s,k}^{(\eta)}\right|$$ and
$$u_{j,k}(\bb_{(\ell+1)})\triangleq u_{j,k}(\bb_{(\ell)})
+\frac 12\left(b_\ell\tilde y_{s+j,k}^\ast+b_{\ell+1}\tilde
y_{s+j+1,k}^\ast\right)
$$
with initial values $g(\bb_{(-P+1)})=0$ and
$u_{j,k}(\bb_{((k-1)Q-P+2)})=0$ for $0\leq j\leq P-1$ and $1\leq
k\leq M$. In addition, the low-complexity heuristic function is
given by:
$$
h_2(\bb_{(\ell)})\triangleq \left\{\begin{array}{ll}
\displaystyle\sum_{m=s+1}^{Q+P-1}\max_{1\leq\eta\leq\Theta}\alpha_{m,k}^{(\eta)}
-\sum_{m=s+1}^{Q+P-1}\left|v_{m,k}^{(\theta)}(\bb_{(\ell)})\right|-\beta_{s,k}^{(\theta)}&\\
\hspace*{10mm}+\displaystyle\sum_{\kappa=k+1}^{M}\left(\sum_{m=1}^{Q+P-1}\max_{1\leq
\eta \leq \theta}
\alpha_{m,\kappa}^{(\eta)}-\beta_{0,\kappa}^{(\theta)}\right),&
\mbox{for }P\leq s\leq Q;\\
\displaystyle\sum_{m=s+1}^{Q+P-1}\max_{1\leq\eta\leq\Theta}\alpha_{m,k+1}^{(\eta)}
-\sum_{m=s+1}^{Q+P-1}\left|v_{m,k+1}^{(\theta)}(\bb_{(\ell)})\right|-\beta_{s,k+1}^{(\theta)}&\\
\hspace*{10mm}+\displaystyle\sum_{m=r+1}^{Q+P-1}\max_{1\leq\eta\leq\Theta}\alpha_{m,k}^{(\eta)}
-\sum_{m=r+1}^{Q+P-1}\left|v_{m,k}^{(\theta)}(\bb_{(\ell)})\right|-\beta_{r,k}^{(\theta)}&\\
\hspace*{10mm}+\displaystyle\sum_{\kappa=k+2}^{M}\left(\sum_{m=1}^{Q+P-1}\max_{1\leq
\eta \leq \theta}
\alpha_{m,\kappa}^{(\eta)}-\beta_{0,\kappa}^{(\theta)}\right),&
\mbox{otherwise},
\end{array}\right.
$$
where $s$, $r$ and $k$ are defined the same as for $g(\cdot)$,
$$v_{m,k}^{(\theta)}(\bb_{(\ell)})\triangleq
\sum_{n=1}^s
w_{m,n,k}^{(\theta)}b_{(k-1)Q+P+n-1}=v_{m,k}^{(\theta)}(\bb_{(\ell-1)})+w_{s,m,k}^{(\theta)}b_\ell,$$
and
$$\beta_{s,k}^{(\theta)}\triangleq
\sum_{m=s+1}^{Q+P-1}\left(\sum_{n=s+1}^{m-1}\left|w_{m,n,k}^{(\theta)}\right|
+\frac 12\left|w_{m,m,k}^{(\theta)}\right|\right)
=\beta_{s-1,k}^{(\theta)}-\sum_{n=s+1}^{Q+P-1}\left|w_{s,n,k}^{(\theta)}\right|
-\frac 12\left|w_{s,s,k}^{(\theta)}\right|$$ with initial values
$v_{m,k}^{(\theta)}(\bb_{(k-1)Q-P+2})=0$ and
$\beta_{0,k}^{(\theta)}=\sum_{m=1}^{Q+P-1}\alpha_{m,k}^{(\theta)}$.

It is worth mentioning that if the single-tree code is adopted,
$h_2(\cdot)$ can be further reduced to:
\begin{eqnarray*}
h_2(\bb_{(\ell)})&\triangleq&\left\{\begin{array}{ll}
\displaystyle\sum_{m=s+1}^{Q+P-1}\alpha_{m,k}^{(1)}
-\sum_{m=s+1}^{Q+P-1}\left|v_{m,k}^{(1)}(\bb_{(\ell)})\right|-\beta_{s,k}^{(1)}&\mbox{for }P\leq s\leq Q;\\
\displaystyle\sum_{m=s+1}^{Q+P-1}\alpha_{m,k+1}^{(1)}
-\sum_{m=s+1}^{Q+P-1}\left|v_{m,k+1}^{(1)}(\bb_{(\ell)})\right|-\beta_{s,k+1}^{(1)}&\\
\hspace*{10mm}+\displaystyle\sum_{m=r+1}^{Q+P-1}\alpha_{m,k}^{(1)}
-\sum_{m=r+1}^{Q+P-1}\left|v_{m,k}^{(1)}(\bb_{(\ell)})\right|-\beta_{r,k}^{(1)}&\mbox{otherwise},
\end{array}\right.
\end{eqnarray*}
since
$
\sum_{m=1}^{Q+P-1}\max_{1\leq \eta \leq \theta}
\alpha_{m,\kappa}^{(\eta)}-\beta_{0,\kappa}^{(\theta)}=\sum_{m=1}^{Q+P-1}\alpha_{m,\kappa}^{(1)}-\beta_{0,\kappa}^{(1)}=0;
$
hence, a sub-blockwise low-complexity on-the-fly
decoding can indeed be conducted under the single code tree
condition.

\begin{figure}[tbp]
\centering \includegraphics[width=5in]{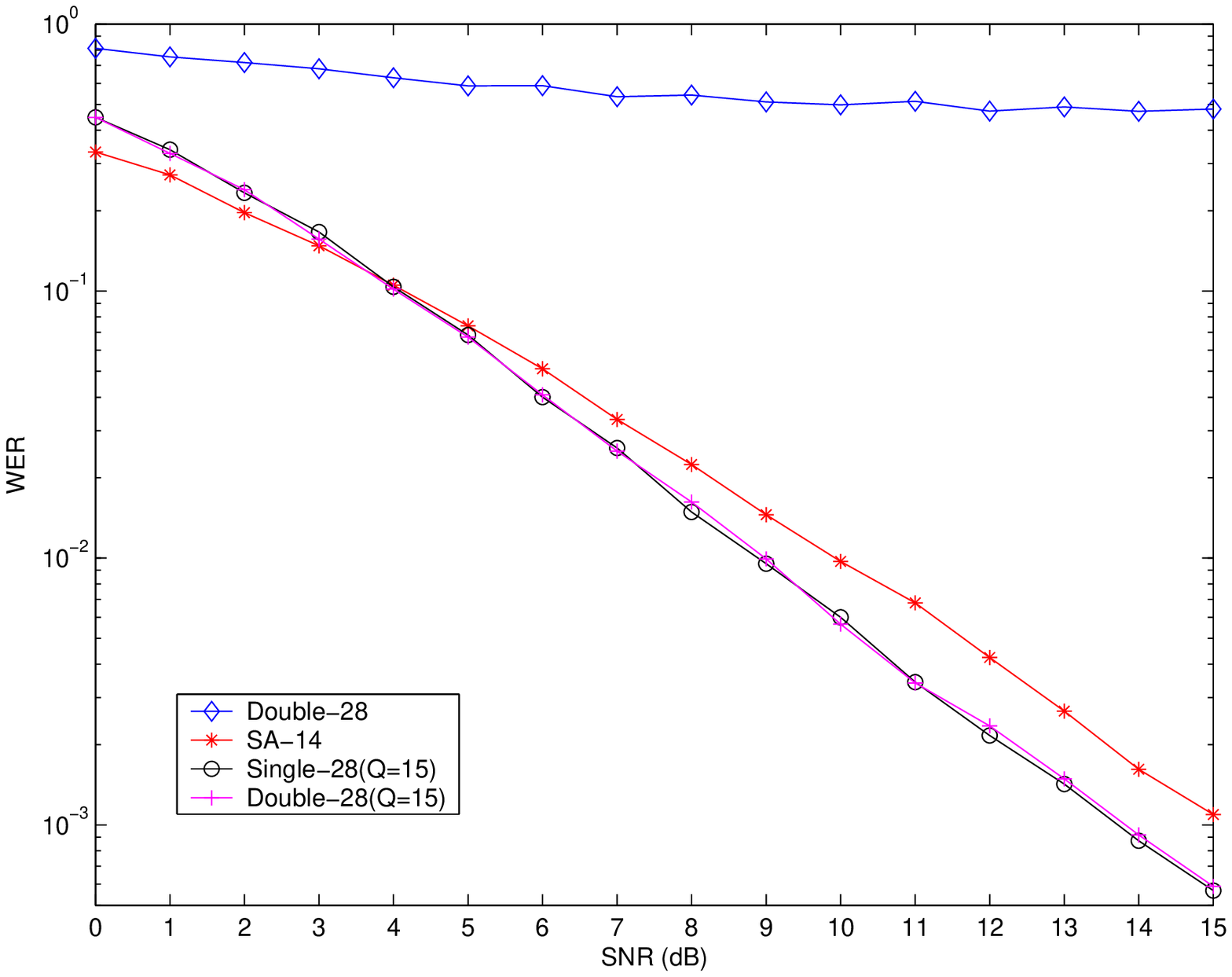} \caption{Word
error rates (BERs) for the codes of Double-$28$, SA-$14$,
Single-$28$($Q$=15) and Double-$28$($Q$=15)
over the quasi-static channel with $Q_{\rm chan}=15$.} \label{fig:sim10}
\end{figure}

\begin{figure}[tbp]
\centering \includegraphics[width=5in]{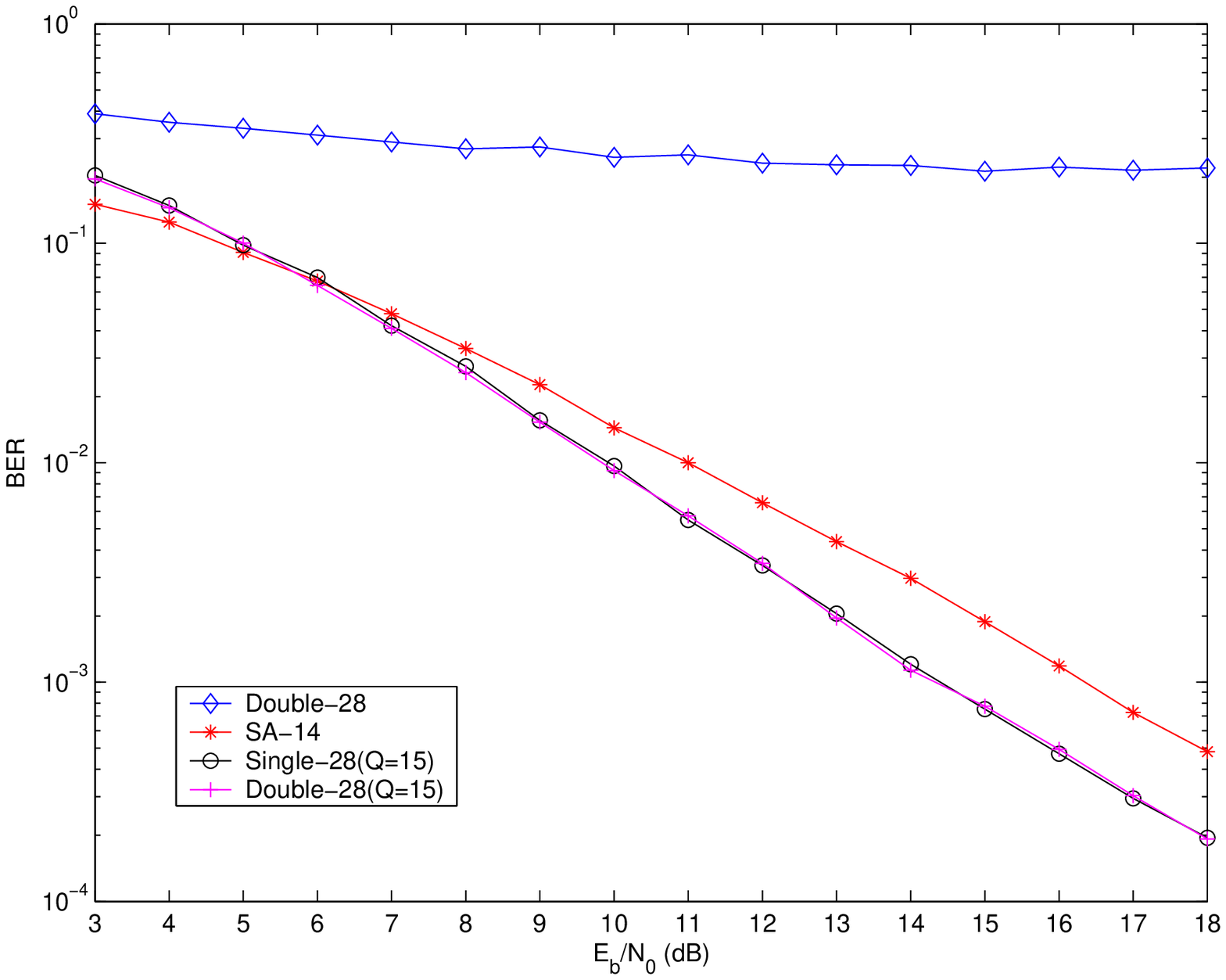} \caption{Bit
error rates (BERs) for the codes of Double-$28$, SA-$14$,
Single-$28$($Q$=15) and Double-$28$($Q$=15)
over the quasi-static channel with $Q_{\rm chan}=15$.} \label{fig:sim11}
\end{figure}

\begin{figure}[tbp]
\centering \includegraphics[width=5in]{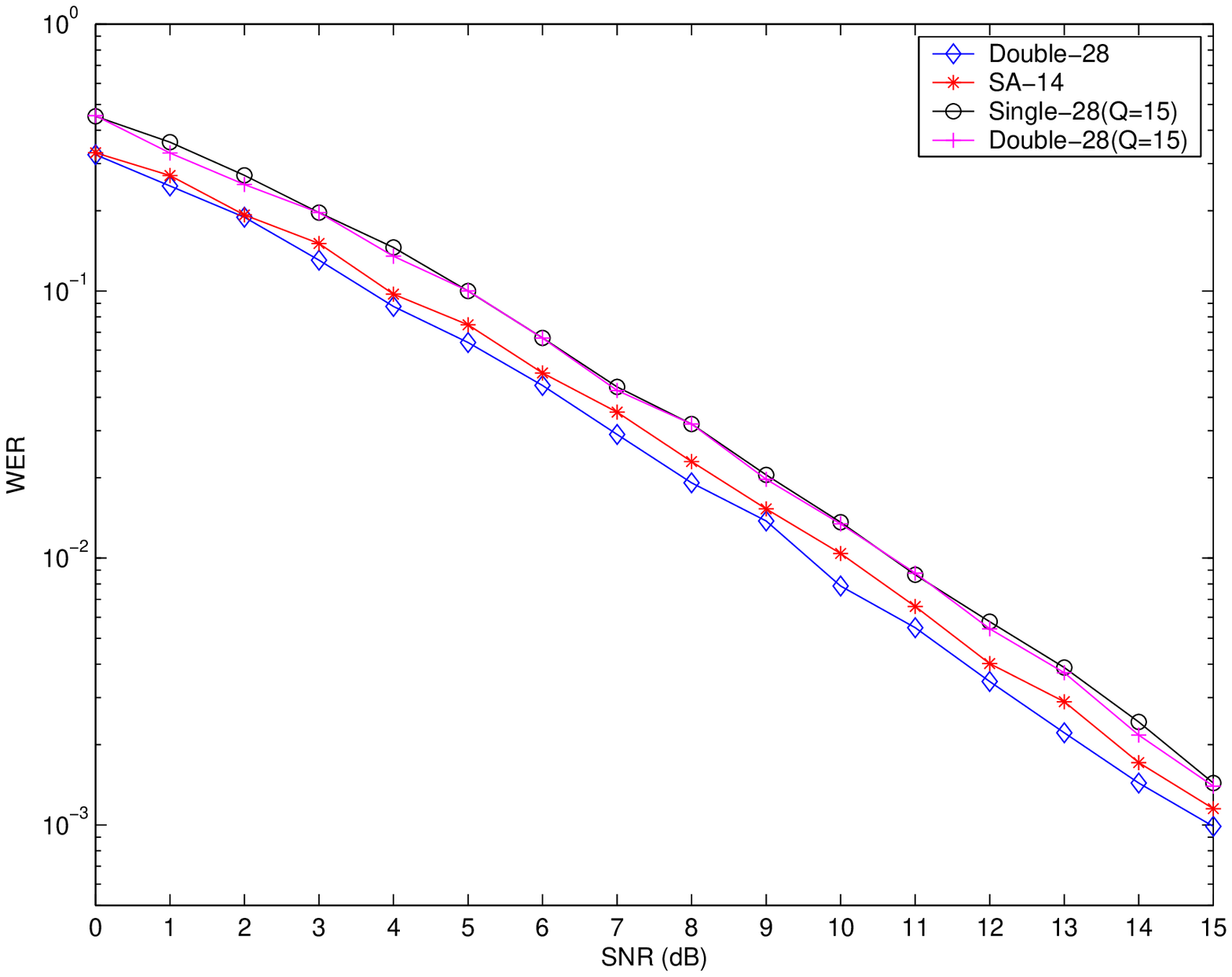} \caption{Word
error rates (BERs) for the codes of Double-$28$, SA-$14$,
Single-$28$($Q$=15) and Double-$28$($Q$=15)
over the quasi-static channel with $Q_{\rm chan}\geq 29$.} \label{fig:sim10a}
\end{figure}

\begin{figure}[tbp]
\centering \includegraphics[width=5in]{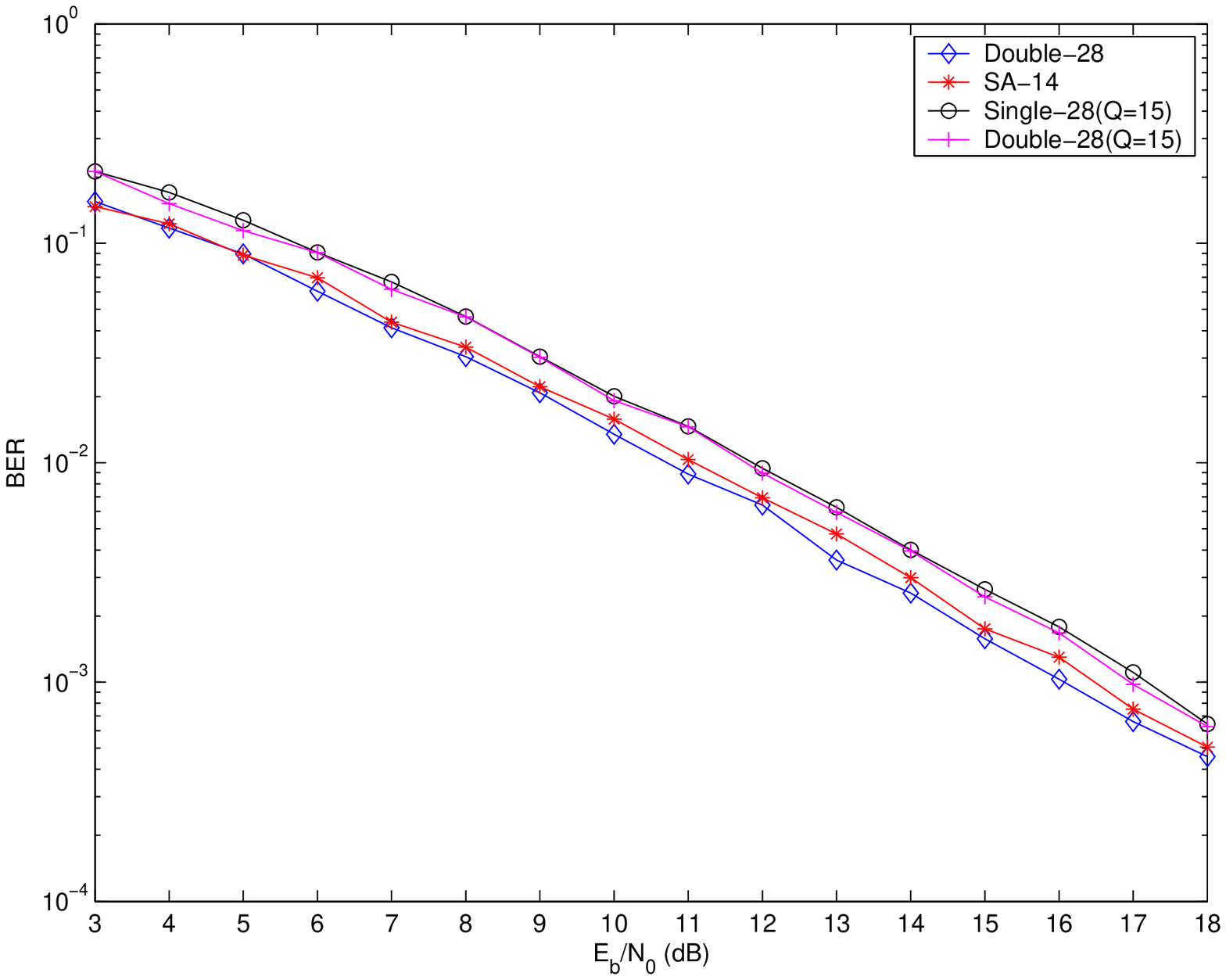} \caption{Bit
error rates (BERs) for the codes of Double-$28$, SA-$14$,
Single-$28$($Q$=15) and Double-$28$($Q$=15)
over the quasi-static channel with $Q_{\rm chan}\geq 29$.} \label{fig:sim11a}
\end{figure}

\begin{figure}[tbp]
\centering \includegraphics[width=5in]{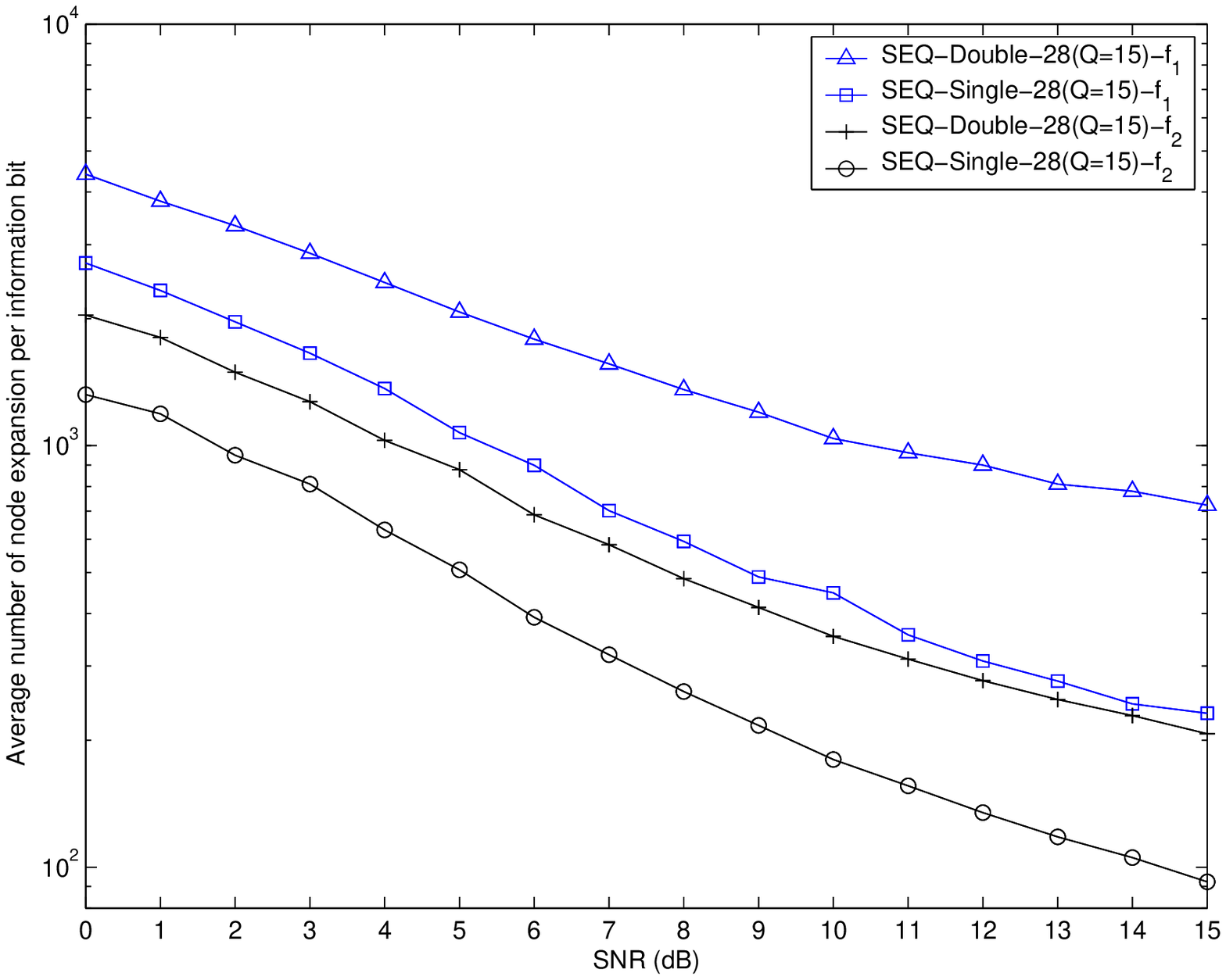}
\caption{
Average numbers of node expansions per information bit
for the codes of
Single-28($Q$=15) and Double-28($Q$=15)
using the priority-first search
decoding guided by either evaluation function $f_1$ or evaluation function
$f_2$ over the quasi-static channel with $Q_{\rm chan}=15$.} \label{fig:sim12}
\end{figure}

Figures \ref{fig:sim10} and \ref{fig:sim11} compare
four codes over fading channels whose channel coefficients
vary in every $15$-symbol period.
Notably, we will use $Q_{\rm chan}$ to denote the varying period of
the channel coefficients $\hb$, and retain $Q$ as the design parameter
for the nonlinear codes.
In notations, ``Double-$28$'' and ``SA-$14$'' denote the codes defined
in the previous sections, and ``Single-$28$($Q$=15)'' and
``Double-$28$($Q$=15)'' are the codes constructed based on the rule
introduced in this section under the design parameter $Q=15$.
Again, the mapping between the bit patterns
and codewords for the SA-$14$ code is defined by simulated annealing.

Both Figs.~\ref{fig:sim10} and \ref{fig:sim11} show that
the Double-$28$ code seriously degrades
when the channel coefficients unexpectedly vary in an intra-codeword fashion.
This hints that
the assumption that the channel coefficients remain
constant in a coding block is critical in the
code design in Section \ref{Codeconstruction}.
Figures \ref{fig:sim10a} and \ref{fig:sim11a} then indicate
that the codes taking into considerations the varying nature
of the channel coefficients within a codeword is robust in its performance when being
applied to channels with constant coefficients.
Thus, we may conclude that
for a channel whose coefficients vary more often than a coding block,
it is advantageous to use the code design for
a fast-fading environment considered in the section.

A more striking result from Fig.~\ref{fig:sim10}
is that even if the codeword length
of the Single-28($Q$=15) and the Double-28($Q$=15) codes is twice of
the SA-14 code, their word error rates are still markedly superior
at medium-to-high SNRs.
Note that the SA-14 code is the computer-optimized code specifically
for $Q_{\rm chan}=15$ channel.
This hints that when the channel memory order is known,
performance gain can be obtained
by considering the inter-subblock correlation, and favors a longer code design.

The decoding complexity, measured in terms of average number
of node expansions per information bit, for
codes of Single-28($Q$=15) and Double-28($Q$=15)
are illustrated in Fig.~\ref{fig:sim12}.
Similar observation is attained that the decoding metric $f_2$
yields less decoding complexity than
the on-the-fly decoding one $f_1$.

\section{Conclusions}
\label{Conclusions}

In this paper, we established the systematic rule to construct
codes based on the optimal signal-to-noise ratio framework
that requires every codeword to satisfy a ``self-orthogonal'' property to combat
the multipath channel effect. Enforced by
this structure, we can then derive a recursive maximum-likelihood
metric for the tree-based priority-first search decoding algorithm,
and hence, avoid the use of the time-consuming exhaustive decoder
that was previously used in \cite{Coskun,Giese,Skoglund} to decode
the structureless computer-optimized codes.
Simulations demonstrate
that the ruled-based codes we
constructed has almost identical performance to the computer-optimized
codes, but its decoding complexity, as anticipated,
is much lower than the exhaustive decoder.

Moreover, two maximum-likelihood decoding metrics were actually
proposed. The first one can be used in an on-the-fly fashion,
while the second one as having a much less decoding complexity
requires the knowledge of all channel outputs. The trade-off between them
is thus evident from our simulations.

Extensions of the code design to a fast-varying
quasi-static environment is added in Section \ref{fastfading}.
Although we only derive the coding rule and its decoding metric
for a fixed $Q$, further extension to the situation that
the channel coefficients $\hb$ vary non-stationarily
as the periods $Q_1$, $Q_2$, $\ldots$, $Q_M$ are not equal is
straightforward. Such design may be suitable
for, e.g.,
the frequency-hopping scheme
of Global System for Mobile communications (GSM) and Universal
Mobile Telecommunications System (UMTS), and the time-hopping
scheme in IS-$54$, in which cases
the channel coefficients change (or hop) at protocol-aware scheduled time instants
as similarly mentioned in \cite{Knopp}.

A limitation on the code design we proposed is that
the decoding complexity grows exponentially with the codeword length.
This constraint is owning to the tree structure of our constructed
codes. It will be an interesting and useful future work
to re-design the self-orthogonal codes that can be fit into a trellis structure,
and make them maximum-likelihood decodable by either the priority-first
search algorithm or the Viterbi-based algorithm.

\section*{Appendix}

\begin{lemma} Fix $P>1$ and $k\geq 1$.
For given integers $\qb=(q_1$, $q_2$, $\cdots$, $q_{P-1})\in[-k,k]^{P-1}$ and
$d_{2-P}$, $d_{3-P}$, $\cdots$, $d_0\in\{\pm 1\}$,
let the number of $d_1,d_2,\cdots,d_k$ that simultaneously satisfy
\begin{eqnarray*}
q_j&=&\sum_{i=1}^{k}d_{i-j}d_{i}\mbox{ for }1\leq j\leq P-1
\end{eqnarray*}
be denoted by $A_k(\qb|d_{2-P},\cdots,d_0)$.
Also, let $\Gb(\cb)$ be the $P\times P$ matrix of the Toeplitz form:
$$
\Gb(\cb)=
\begin{bmatrix}
N&c_1&c_2&\cdots&c_{P-1}\\
c_1&N&c_1&\cdots&c_{P-2}\\
c_2&c_1&N&\cdots&c_{P-3}\\
\vdots&\vdots&\vdots&\ddots&\vdots\\
c_{P-1}&c_{P-2}&c_{P-3}&\cdots&N
\end{bmatrix},
$$
where $\cb=(c_1,c_2,\cdots,c_{P-1})\in\{\pm 1\}^{P-1}$.
Then, for $\ell=1,2,\cdots,N$,
$$|\ac(\bb_{(\ell)}|\Gb(\cb))|=A_{N-\ell}\left(\cb-\mb_\ell\left|b_{\ell-P+2},\cdots,b_\ell\right.\right).$$
where $\mb_\ell=(m_\ell^{(1)},\cdots,m_\ell^{(P-1)})$ and
$m_\ell^{(k)}\triangleq(b_1b_{k+1}+\cdots+b_{\ell-k}b_\ell)\cdot\oneb\{\ell>k\}$.
\end{lemma}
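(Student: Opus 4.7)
The plan is to show directly, via a change of variables, that the two counting problems on the two sides of the claim coincide. The key observation is that the off-diagonal entries of $\Bb^T\Bb$ are just the shifted autocorrelations of the codeword, so the constraint $\Bb^T\Bb = \Gb(\cb)$ becomes $P-1$ scalar autocorrelation constraints that split cleanly into a ``fixed'' contribution (depending only on $\bb_{(\ell)}$) and a ``tail'' contribution. The tail contribution is exactly what $A_{N-\ell}$ counts after relabeling, so no combinatorial evaluation is needed --- the lemma asserts equality of two counting functions, not a closed form.

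First I would verify that $(\Bb^T\Bb)_{i,j} = \sum_{n=1}^{N-|i-j|} b_n\,b_{n+|i-j|}$ by expanding the inner product of the $i$th and $j$th columns of the Toeplitz matrix $\Bb$. In particular every diagonal entry equals $N$ since $b_n^2 = 1$, and $\Bb^T\Bb$ is itself Toeplitz, which justifies the form of $\Gb(\cb)$ in the statement. Consequently, $\Bb^T\Bb = \Gb(\cb)$ is equivalent to the $P-1$ scalar equalities
\begin{equation*}
c_k \;=\; \sum_{n=1}^{N-k} b_n\,b_{n+k}, \qquad 1 \leq k \leq P-1.
\end{equation*}

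Next I would split each autocorrelation at position $\ell$. Reindexing with $m = n+k$, I write
\begin{equation*}
c_k \;=\; \sum_{m=k+1}^{\ell} b_{m-k}\,b_m \;+\; \sum_{m=\ell+1}^{N} b_{m-k}\,b_m,
\end{equation*}
with the convention that the first sum is empty whenever $\ell \leq k$. Comparing with the definition of $m_\ell^{(k)}$ and its indicator $\oneb\{\ell > k\}$, the first sum equals $m_\ell^{(k)}$ in every case. Hence the constraints imposed on the unknown tail $(b_{\ell+1}, \ldots, b_N)$ reduce to
\begin{equation*}
c_k - m_\ell^{(k)} \;=\; \sum_{m=\ell+1}^{N} b_{m-k}\,b_m, \qquad 1 \leq k \leq P-1.
\end{equation*}
Then I would apply the bijective relabeling $d_i := b_{\ell+i}$ for $2-P \leq i \leq N-\ell$. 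Under it, the last $P-1$ bits of the fixed prefix become the conditioning symbols $d_{2-P}, \ldots, d_0 = b_{\ell-P+2}, \ldots, b_\ell$; the free tail $(b_{\ell+1}, \ldots, b_N)$ becomes $(d_1, \ldots, d_{N-\ell})$; and the displayed right-hand side rewrites as $\sum_{i=1}^{N-\ell} d_{i-j}\,d_i$ with $j=k$. Setting $q_j := c_j - m_\ell^{(j)}$, the constraint system is precisely the one defining $A_{N-\ell}\bigl(\cb - \mb_\ell \mid b_{\ell-P+2}, \ldots, b_\ell\bigr)$, so the two counts agree term-for-term.

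The conceptual content is light; the main obstacle is purely bookkeeping --- aligning the Toeplitz indexing of $\Gb(\cb)$, the autocorrelation-lag index $k$, and the shifted tail index $i = m - \ell$ so that the indicator $\oneb\{\ell > k\}$ in the definition of $m_\ell^{(k)}$ exactly absorbs the boundary case in which the ``fixed'' portion of the autocorrelation sum is empty. This is the same re-indexing performed in Lemma~\ref{Lem1} for $P=2$, which there admits the further simplification that a single-lag autocorrelation constraint on a $\pm 1$ sequence is counted by a binomial coefficient; for general $P$ no such closed form is claimed, so the argument terminates at the identification of the two counting functions.
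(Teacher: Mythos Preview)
Your proposal is correct and follows essentially the same route as the paper: write the constraint $\Bb^T\Bb=\Gb(\cb)$ as the $P-1$ autocorrelation equalities $c_k=\sum_{n=1}^{N-k}b_nb_{n+k}$, split each at position $\ell$ so that the prefix contribution is exactly $m_\ell^{(k)}$, and recognize the remaining tail constraints as the defining system of $A_{N-\ell}(\cb-\mb_\ell\mid b_{\ell-P+2},\ldots,b_\ell)$ after the shift $d_i=b_{\ell+i}$. The only difference is that you spell out the relabeling bijection and the handling of the indicator $\oneb\{\ell>k\}$ more explicitly than the paper does.
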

\begin{proof}
For $\Bb^T\Bb=\Gb(\cb)$,
it requires
\begin{eqnarray*}
c_1&=&b_1b_2+b_2b_3+\cdots+b_{N-1}b_N=m_\ell^{(1)}+b_\ell b_{\ell+1}+\cdots+b_{N-1}b_N\\
c_2&=&b_1b_3+b_2b_4+\cdots+b_{N-2}b_N=m_\ell^{(2)}+b_{\ell-1}b_{\ell+1}+\cdots+b_{N-2}b_N\\
&\vdots&\\
c_{P-1}&=&b_1b_P+b_2b_{P+1}+\cdots+b_{N-P+1}b_N=m_\ell^{(P-1)}+b_{\ell-P+2}b_{\ell+1}+\cdots+b_{N-P+1}b_N.
\end{eqnarray*}
Re-writing the above equations as:
\begin{eqnarray*}
c_1-m_\ell^{(1)}&=&b_\ell b_{\ell+1}+\cdots+b_{N-1}b_N\\
c_2-m_\ell^{(2)}&=&b_{\ell-1}b_{\ell+1}+\cdots+b_{N-2}b_N\\
&\vdots&\\
c_{P-1}-m_\ell^{(P-1)}&=&b_{\ell-P+2}b_{\ell+1}+\cdots+b_{N-P+1}b_N,
\end{eqnarray*}
we obtain:
$$|\ac(\bb_{(\ell)}|\Gb(\cb))|=A_{N-\ell}\left(\cb-\mb_\ell\left|b_{\ell-P+2},\cdots,b_\ell\right.\right).$$
\end{proof}

\bigskip

It can be easily verified that $A_k(\qb|-d_{2-P},\cdots,-d_0)=A_k(\qb|d_{2-P},\cdots,d_0)$ since
$$
q_j=\sum_{i=1}^{j}(-d_{i-j})d_{i}+\sum_{i=j+1}^k d_{i-j}d_i=\sum_{i=1}^{j}d_{i-j}(-d_{i})+\sum_{i=j+1}^k (-d_{i-j})(-d_i).
$$
Therefore, only $2^{P-2}$ tables are required.
The tables of $A_k(\qb|d_{-P+2},\cdots,d_0)$ for $P=3$ and $1\leq k\leq 5$ are illustrated in Table~\ref{table3}
as an example.

\begin{table}[hbt]
\caption{Tables of $A_k(\qb|d_{-1},d_0)$ for $P=3$.
}\label{table3}
\begin{flushleft}
{\tiny
\begin{tabular}{|c|c||c|c|c|}\hline
\multicolumn{2}{|c||}{}
&\multicolumn{3}{c|}{$q_2$}\\\cline{3-5}
\multicolumn{2}{|c||}{\raisebox{1.5ex}[0pt]{\!\!\!\!$A_1(\cdot|-1,-1)$\!\!\!\!}}&$-1$&$0$&$1$\\\hline\hline
     &$-1$&1&0&0\\\cline{2-5}
$q_1$&$ 0$&0&0&0\\\cline{2-5}
     &$ 1$&0&0&1\\\hline
\end{tabular}
\quad
\begin{tabular}{|c|c||c|c|c|}\hline
\multicolumn{2}{|c||}{}
&\multicolumn{3}{c|}{$q_2$}\\\cline{3-5}
\multicolumn{2}{|c||}{\raisebox{1.5ex}[0pt]{\!\!\!\!$A_1(\cdot|-1,1)$\!\!\!\!}}&$-1$&$0$&$1$\\\hline\hline
     &$-1$&0&0&1\\\cline{2-5}
$q_1$&$ 0$&0&0&0\\\cline{2-5}
     &$ 1$&1&0&0\\\hline
\end{tabular}\\[2mm]
\begin{tabular}{|c|c||c|c|c|c|c|}\hline
\multicolumn{2}{|c||}{}
&\multicolumn{5}{c|}{$q_2$}\\\cline{3-7}
\multicolumn{2}{|c||}{\raisebox{1.5ex}[0pt]{\!\!\!\!$A_2(\cdot|-1,-1)$\!\!\!\!}}&$-2$&$-1$&$0$&$1$&$2$\\\hline\hline
     &$-2$&0&0&1&0&0\\\cline{2-7}
     &$-1$&0&0&0&0&0\\\cline{2-7}
$q_1$&$ 0$&1&0&1&0&0\\\cline{2-7}
     &$ 1$&0&0&0&0&0\\\cline{2-7}
     &$ 2$&0&0&0&0&1\\\hline
\end{tabular}
\quad
\begin{tabular}{|c|c||c|c|c|c|c|}\hline
\multicolumn{2}{|c||}{$$}
&\multicolumn{5}{c|}{$q_2$}\\\cline{3-7}
\multicolumn{2}{|c||}{\raisebox{1.5ex}[0pt]{\!\!\!\!$A_2(\cdot|-1,1)$\!\!\!\!}}&$-2$&$-1$&$0$&$1$&$2$\\\hline\hline
     &$-2$&0&0&0&0&1\\\cline{2-7}
     &$-1$&0&0&0&0&0\\\cline{2-7}
$q_1$&$ 0$&1&0&1&0&0\\\cline{2-7}
     &$ 1$&0&0&0&0&0\\\cline{2-7}
     &$ 2$&0&0&1&0&0\\\hline
\end{tabular}\\[2mm]
\mbox{\begin{tabular}{|c|c||c|c|c|c|c|c|c|}\hline
\multicolumn{2}{|c||}{}
&\multicolumn{7}{c|}{$q_2$}\\\cline{3-9}
\multicolumn{2}{|c||}{\raisebox{1.5ex}[0pt]{\!\!\!\!$A_3(\cdot|-1,-1)$\!\!\!\!}}&$-3$&$-2$&$-1$&$0$&$1$&$2$&$3$\\\hline\hline
     &$-3$&  0 & 0 & 0 & 0 & 1 & 0 & 0  \\\cline{2-9}
     &$-2$&  0 & 0 & 0 & 0 & 0 & 0 & 0  \\\cline{2-9}
     &$-1$&  1 & 0 & 1 & 0 & 1 & 0 & 0  \\\cline{2-9}
$q_1$&$ 0$&  0 & 0 & 0 & 0 & 0 & 0 & 0  \\\cline{2-9}
     &$ 1$&  0 & 0 & 2 & 0 & 1 & 0 & 0  \\\cline{2-9}
     &$ 2$&  0 & 0 & 0 & 0 & 0 & 0 & 0  \\\cline{2-9}
     &$ 3$&  0 & 0 & 0 & 0 & 0 & 0 & 1  \\\hline
\end{tabular}
\quad
\begin{tabular}{|c|c||c|c|c|c|c|c|c|}\hline
\multicolumn{2}{|c||}{}
&\multicolumn{7}{c|}{$q_2$}\\\cline{3-9}
\multicolumn{2}{|c||}{\raisebox{1.5ex}[0pt]{\!\!\!\!$A_3(\cdot|-1,1)$\!\!\!\!}}&$-3$&$-2$&$-1$&$0$&$1$&$2$&$3$\\\hline\hline
     &$-3$&  0 & 0 & 0 & 0 & 0 & 0 & 1 \\\cline{2-9}
     &$-2$&  0 & 0& 0 &  0 & 0 & 0 & 0 \\\cline{2-9}
     &$-1$&  0 & 0 & 2 & 0 & 1 & 0 & 0 \\\cline{2-9}
$q_1$&$ 0$&  0 & 0 & 0 & 0 & 0 & 0 & 0 \\\cline{2-9}
     &$ 1$&  1 & 0 & 1 & 0 & 1 & 0 & 0 \\\cline{2-9}
     &$ 2$&  0 & 0 & 0 & 0 & 0 & 0 & 0 \\\cline{2-9}
     &$ 3$&  0 & 0 & 0 & 0 & 1 & 0 & 0 \\\hline
\end{tabular}}
$\vdots$\\[2mm]
\mbox{\begin{tabular}{|c|c||c|c|c|c|c|c|c|c|c|}\hline
\multicolumn{2}{|c||}{}
&\multicolumn{9}{c|}{$q_2$}\\\cline{3-11}
\multicolumn{2}{|c||}{\raisebox{1.5ex}[0pt]{\!\!\!\!$A_4(\cdot|-1,-1)$\!\!\!\!}}&$-4$&$-3$&$-2$&$-1$&$0$&$1$&$2$&$3$&$4$\\\hline\hline
     &$-4$& 0 & 0 & 0 & 0 & 0 & 0 & 1 & 0 & 0 \\\cline{2-11}
     &$-3$& 0 & 0 & 0 & 0 & 0 & 0 & 0 & 0 & 0\\\cline{2-11}
     &$-2$& 0 & 0 & 2 & 0 & 1 & 0 & 1 & 0 & 0\\\cline{2-11}
     &$-1$& 0 & 0 & 0 & 0 & 0 & 0 & 0 & 0 & 0\\\cline{2-11}
$q_1$&$ 0$& 1 & 0 & 2 & 0 & 2 & 0 & 1 & 0 & 0\\\cline{2-11}
     &$ 1$& 0 & 0 & 0 & 0 & 0 & 0 & 0 & 0 & 0\\\cline{2-11}
     &$ 2$& 0 & 0 & 0 & 0 & 3 & 0 & 1 & 0 & 0\\\cline{2-11}
     &$ 3$& 0 & 0 & 0 & 0 & 0 & 0 & 0 & 0 & 0\\\cline{2-11}
     &$ 4$& 0 & 0 & 0 & 0 & 0 & 0 & 0 & 0 & 1\\\hline
\end{tabular}
\quad
\begin{tabular}{|c|c||c|c|c|c|c|c|c|c|c|}\hline
\multicolumn{2}{|c||}{}
&\multicolumn{9}{c|}{$q_2$}\\\cline{3-11}
\multicolumn{2}{|c||}{\raisebox{1.5ex}[0pt]{\!\!\!\!$A_4(\cdot|-1,1)$\!\!\!\!}}&$-4$&$-3$&$-2$&$-1$&$0$&$1$&$2$&$3$&$4$\\\hline\hline
     &$-4$& 0 & 0 & 0 & 0 & 0 & 0 & 0 & 0 & 1 \\\cline{2-11}
     &$-3$& 0 & 0 & 0 & 0 & 0 & 0 & 0 & 0 & 0 \\\cline{2-11}
     &$-2$& 0 & 0 & 0 & 0 & 3 & 0 & 1 & 0 & 0 \\\cline{2-11}
     &$-1$& 0 & 0 & 0 & 0 & 0 & 0 & 0 & 0 & 0 \\\cline{2-11}
$q_1$&$ 0$& 1 & 0 & 2 & 0 & 2 & 0 & 1 & 0 & 0 \\\cline{2-11}
     &$ 1$& 0 & 0 & 0 & 0 & 0 & 0 & 0 & 0 & 0 \\\cline{2-11}
     &$ 2$& 0 & 0 & 2 & 0 & 1 & 0 & 1 & 0 & 0 \\\cline{2-11}
     &$ 3$& 0 & 0 & 0 & 0 & 0 & 0 & 0 & 0 & 0 \\\cline{2-11}
     &$ 4$& 0 & 0 & 0 & 0 & 0 & 0 & 1 & 0 & 0 \\\hline
\end{tabular}}
$\vdots$\\[2mm]
\mbox{\begin{tabular}{|c|c||c|c|c|c|c|c|c|c|c|c|c|}\hline
\multicolumn{2}{|c||}{}
&\multicolumn{11}{c|}{$q_2$}\\\cline{3-13}
\multicolumn{2}{|c||}{\raisebox{1.5ex}[0pt]{\!\!\!\!$A_5(\cdot|-1,-1)$\!\!\!\!}}&$-5$&$-4$&$-3$&$-2$&$-1$&$0$&$1$&$2$&$3$&$4$&$5$\\\hline\hline
     &$-5$& 0 & 0 & 0 & 0 & 0 & 0 & 0 & 0 & 1 & 0 & 0 \\\cline{2-13}
     &$-4$& 0 & 0 & 0 & 0 & 0 & 0 & 0 & 0 & 0 & 0 & 0 \\\cline{2-13}
     &$-3$& 0 & 0 & 0 & 0 & 3 & 0 & 1 & 0 & 1 & 0 & 0\\\cline{2-13}
     &$-2$& 0 & 0 & 0 & 0 & 0 & 0 & 0 & 0 & 0 & 0 & 0\\\cline{2-13}
     &$-1$& 1 & 0 & 2 & 0 & 4 & 0 & 2 & 0 & 1 & 0 & 0\\\cline{2-13}
$q_1$&$ 0$& 0 & 0 & 0 & 0 & 0 & 0 & 0 & 0 & 0 & 0 & 0\\\cline{2-13}
     &$ 1$& 0 & 0 & 3 & 0 & 3 & 0 & 3 & 0 & 1 & 0 & 0\\\cline{2-13}
     &$ 2$& 0 & 0 & 0 & 0 & 0 & 0 & 0 & 0 & 0 & 0 & 0\\\cline{2-13}
     &$ 3$& 0 & 0 & 0 & 0 & 0 & 0 & 4 & 0 & 1 & 0 & 0\\\cline{2-13}
     &$ 4$& 0 & 0 & 0 & 0 & 0 & 0 & 0 & 0 & 0 & 0 & 0\\\cline{2-13}
     &$ 5$& 0 & 0 & 0 & 0 & 0 & 0 & 0 & 0 & 0 & 0 & 1\\\hline
\end{tabular}
\quad
\begin{tabular}{|c|c||c|c|c|c|c|c|c|c|c|c|c|}\hline
\multicolumn{2}{|c||}{}
&\multicolumn{11}{c|}{$q_2$}\\\cline{3-13}
\multicolumn{2}{|c||}{\raisebox{1.5ex}[0pt]{\!\!\!\!$A_5(\cdot|-1,1)$\!\!\!\!}}&$-5$&$-4$&$-3$&$-2$&$-1$&$0$&$1$&$2$&$3$&$4$&$5$\\\hline\hline
     &$-5$& 0 & 0 & 0 & 0 & 0 & 0 & 0 & 0 & 0 & 0 & 1\\\cline{2-13}
     &$-4$& 0 & 0 & 0 & 0 & 0 & 0 & 0 & 0 & 0 & 0 & 0\\\cline{2-13}
     &$-3$& 0 & 0 & 0 & 0 & 0 & 0 & 4 & 0 & 1 & 0 & 0\\\cline{2-13}
     &$-2$& 0 & 0 & 0 & 0 & 0 & 0 & 0 & 0 & 0 & 0 & 0\\\cline{2-13}
     &$-1$& 0 & 0 & 3 & 0 & 3 & 0 & 3 & 0 & 1 & 0 & 0\\\cline{2-13}
$q_1$&$ 0$& 0 & 0 & 0 & 0 & 0 & 0 & 0 & 0 & 0 & 0 & 0\\\cline{2-13}
     &$ 1$& 1 & 0 & 2 & 0 & 4 & 0 & 2 & 0 & 1 & 0 & 0\\\cline{2-13}
     &$ 2$& 0 & 0 & 0 & 0 & 0 & 0 & 0 & 0 & 0 & 0 & 0\\\cline{2-13}
     &$ 3$& 0 & 0 & 0 & 0 & 3 & 0 & 1 & 0 & 1 & 0 & 0\\\cline{2-13}
     &$ 4$& 0 & 0 & 0 & 0 & 0 & 0 & 0 & 0 & 0 & 0 & 0\\\cline{2-13}
     &$ 5$& 0 & 0 & 0 & 0 & 0 & 0 & 0 & 0 & 1 & 0 & 0\\\hline
\end{tabular}}
}
\end{flushleft}
\end{table}

\section*{Acknowledgement}

The authors would like to thank Prof.~M.~Skoglund, Dr.~J.~Giese and Prof.~S.~Parkvall
of the Royal Institute of Technology (KTH), Stockholm,
Sweden, for kindly providing us their computer-searched codes
for further study in this paper.

\end{document}